\documentclass[12pt]{article}

\usepackage[T1]{fontenc}
\usepackage[utf8]{inputenc}
\usepackage{lmodern}
\usepackage{microtype}
\usepackage[margin=1.12in]{geometry}
\usepackage{amsmath,amssymb,amsthm,mathtools}
\usepackage{booktabs,array,enumitem}
\usepackage[authoryear,round]{natbib}
\usepackage{xcolor}
\usepackage{hyperref}
\hypersetup{
  colorlinks=true,
  linkcolor=blue!50!black,
  citecolor=blue!50!black,
  urlcolor=blue!50!black,
  pdftitle={Taxing Capital to Protect It},
  pdfsubject={Safe and vulnerable tax bases under limited commitment},
  pdfauthor={Georgy Lukyanov and Hengrina Ly}
}

\numberwithin{equation}{section}
\setlist{itemsep=0.25em,topsep=0.4em}

\newtheorem{assumption}{Assumption}[section]
\newtheorem{proposition}[assumption]{Proposition}
\newtheorem{theorem}[assumption]{Theorem}
\newtheorem{lemma}[assumption]{Lemma}
\newtheorem{corollary}[assumption]{Corollary}
\theoremstyle{definition}
\newtheorem{definition}[assumption]{Definition}
\newtheorem{example}[assumption]{Example}
\theoremstyle{remark}
\newtheorem{remark}[assumption]{Remark}

\newcommand{\R}{\mathbb R}

\newcommand{\argmax}{\mathop{\rm arg\,max}}
\newcommand{\wh}[1]{\widehat{#1}}
\newcommand{\dd}{\,d}
\newcommand{\cV}{\mathcal V}
\newcommand{\cJ}{\mathcal J}
\newcommand{\cR}{\mathcal R}

\title{Taxing Capital to Protect It}
\author{
Georgy Lukyanov\thanks{Toulouse School of Economics;
\href{mailto:georgy.lukyanov@tse-fr.eu}{georgy.lukyanov@tse-fr.eu}.}
\and
Hengrina Ly\thanks{Université de Rouen Normandie, LERN; ENS Paris-Saclay,
CEPS; \href{mailto:hengrina.ly@univ-rouen.fr}{hengrina.ly@univ-rouen.fr}.}
}
\date{September 2026}

\begin{document}
\maketitle

\begin{abstract}
This paper studies the composition of taxation when the government cannot fully commit to respecting private returns after investment. A fiscal authority must finance a given expenditure from labor and capital income. Ordinary tax receipts are protected, but an opportunistic executive can seize part of the remaining capital payment. We show that a revenue-neutral increase in the capital tax raises the probability of compliance whenever the labor tax is below its local, fixed-wage revenue peak, provided the equilibrium remains on a regular mixing branch. This result does not depend on the elasticity of substitution between capital and labor. The same reform can raise investment: the gain in expected retention must outweigh the decline in the opportunist's continuation gain as compliance becomes less informative. We also characterize the Ramsey allocation conditional on full compliance. Limited commitment then imposes a ceiling on sustainable capital payments, rather than a general lower bound on the statutory capital-tax rate. Finally, we distinguish a change in the authority's concern for the future from an increase in both actors' patience. The former favors more informative policies when continuation welfare is convex; the latter has no unconditional direction in the reduced-form policy problem.
\end{abstract}

\noindent\textbf{Keywords:} optimal taxation; limited commitment; expropriation risk; government reputation; tax composition; vulnerable tax bases.\\
\textbf{JEL codes:} H21; H30; D82; E62; C73.

\clearpage
\section{Introduction}
\label{sec:introduction}

An investor considering a project cares about both the tax that has been announced and the possibility that the government will subsequently demand more. A low statutory tax is of little comfort if, once the investment is in place, the executive can appropriate the remaining return. Conversely, a government that collects more through ordinary taxation leaves less to be taken at that later date. This raises a question: can an increase in the capital tax make investment more attractive by reducing the government's incentive to expropriate it?

The usual objection is immediate. A capital tax reduces the return received by the investor, and hence discourages the investment from which the government hopes to collect revenue. In the absence of commitment, however, this is only one part of the comparison. The return promised to investors is also a potential source of income for an opportunistic executive. Reducing that return can make it less attractive to violate the tax code, thereby increasing the probability that investors receive what they have been promised. Whether investment rises or falls depends on which of these two effects is stronger.

This paper develops a model in which the fiscal authority chooses how to finance a fixed expenditure requirement from two tax bases. The first is labor income, which is paid and taxed before the executive can intervene. The second is the return to project capital, part of which remains exposed to seizure after ordinary taxes have been collected. We refer to these as the \emph{safe} and the \emph{vulnerable} bases, respectively.\footnote{These labels concern the timing of payments and the extent of enforcement. They are not claims that labor is intrinsically safe or that all capital income is exposed. A project whose receipts remain within the executive's reach is a useful interpretation; income that has already been paid and placed beyond that reach supplies the comparison.} The authority announces the tax code before households choose investment and labor. It has no private information about the executive's type. Ordinary revenue is committed to public spending, whereas the executive retains discretion over the remaining exposed payment.

We introduce reputation by assuming that the executive may be of two types. The committed type always respects the tax code. The opportunistic type may instead seize the exposed return whenever it is profitable to do so. Citizens do not observe the executive's type, but they do observe whether it has respected the code. Seizure therefore reveals the opportunistic type, while compliance preserves the possibility that the executive is committed. The opportunist may be willing to forgo current extraction in order to retain this reputation and the future opportunities associated with it.

The separation between the authority that chooses ordinary taxes and the executive that can subsequently violate the code is important. It allows us to ask a Ramsey question about tax composition without having the choice of taxes itself signal the executive's private type. It also specifies what ordinary taxation can accomplish: a payment assigned to the public budget is no longer available for private seizure.\footnote{One possible interpretation is a budget authority that can commit ordinary receipts to expenditure, but cannot fully control an executive's extraordinary intervention in private projects. The model assumes this division of powers; it does not explain how such an institution emerges or survives.} If the executive could appropriate ordinary receipts as well, collecting them earlier would not provide the same protection.

Our first result compares two ways of raising the same revenue. Starting from an equilibrium in which the opportunist randomizes between compliance and seizure, consider a small increase in the capital-tax rate and a compensating reduction in the labor-tax rate. Theorem \ref{thm:partial-vulnerability} gives an exact condition under which this reform raises compliance. On a regular equilibrium branch, the condition is that the labor tax be below its local revenue peak when the wage is held fixed. The elasticity of substitution between capital and labor and the elasticity of capital supply affect the size of the response, but not its sign.

The intuition involves both the direct tax effect and the adjustment of production. Taxing capital collects part of the payment that the opportunist could otherwise seize. Taxing labor can also reduce that payment, but indirectly: it discourages labor supply and changes output and the return to capital. Thus it would not be enough to compare the two taxes at fixed factor supplies. Our result takes these general-equilibrium responses into account. Below the stated labor-tax threshold, the direct effect of collecting the vulnerable return is sufficiently strong that a revenue-neutral shift toward capital taxation makes compliance more likely.

A higher probability of compliance does not, by itself, establish that investors are better protected. The reform also leaves them with a smaller statutory share of the capital return. Proposition \ref{prop:partial-investment} separates these effects. Investment rises if the elasticity of expected retention with respect to compliance exceeds the elasticity of the opportunist's continuation gain with respect to reputation. The second elasticity enters because more frequent compliance is less informative: the public attributes a smaller reputational gain to an action that opportunists themselves undertake more often. When the investment condition holds, the reduction in labor taxation also raises labor supply. Output and current resource surplus then increase even though the government collects the same amount of revenue.

The distinction between a local reform and a comparison across equilibrium regimes leads to our second result. We ask which allocations can be sustained if the authority requires the executive to comply with certainty. Household optimality allows us to express the amount exposed to seizure directly in terms of investment. The incentive constraint then places an upper bound on sustainable capital, which increases with the executive's initial reputation and its patience, holding the continuation schedule fixed. On an allocation branch where surplus is single-peaked, the constrained Ramsey choice is the ordinary Ramsey capital stock or this upper bound, whichever is smaller (Proposition \ref{prop:allocation-ceiling}). The result is conditional on full compliance; it does not say that inducing full compliance is always the authority's preferred policy.

At first sight, this ceiling on investment might seem to imply a lower bound on the capital-tax rate. That conclusion is not generally valid. The tax rate needed to implement a particular allocation also depends on the marginal product of capital, which changes with both inputs. We therefore state the restriction on the \emph{capital payment and allocation}, where it has a definite direction, rather than imposing a direction on the implementing tax rate that need not follow.

Finally, we study the value of information generated by the executive's behavior. More frequent compliance improves current credibility, but makes the public less able to distinguish the two types. Whether this is a cost depends on the shape of continuation welfare. When that welfare is convex in reputation, an authority that places more weight on the future favors policies that leave more scope for learning. When it is concave, smoother reputation can instead be desirable. Our two-period production benchmark illustrates the latter case and derives the opportunist's continuation elasticity from factor shares and supply elasticities.

It matters here what we mean by greater concern for the future. Raising only the authority's continuation weight leaves the opportunist's incentives unchanged. Raising a common discount factor changes those incentives as well. Proposition \ref{prop:common-beta} identifies the additional terms in that comparison, and Example \ref{ex:common-beta-counterexample} shows why the direction of the fixed-weight result cannot be carried over without further assumptions. The example is a reduced-form policy problem, not a counterexample constructed from the production economy.

\subsection{Related literature}

Our starting point is the Ramsey problem of choosing distortionary taxes to finance an expenditure requirement \citep{Ramsey1927,DiamondMirrlees1971a,DiamondMirrlees1971b}. The departure is that the private return left after taxation also affects the executive's incentive to respect the allocation. The authority must therefore account for the effect of its tax instruments on compliance, in addition to their familiar effects on factor supply.

The closest theoretical connection is to \citet{Phelan2006} and \citet{Lu2013}. In these models, uncertainty about government type can sustain restraint by an opportunistic government. Lu allows the trustworthy type to choose its announced policy and studies the resulting interaction between taxation, credibility, and learning. Our paper does not claim this interaction as a new mechanism. We ask instead how to divide a given revenue requirement between two bases with different exposure to subsequent seizure. The main contribution is the signed comparison between the two instruments, together with its implications for investment and for the set of fully credible allocations.

A related argument appears in \citet{AcemogluGolosovTsyvinski2011}, where capital taxation can relax the politician's incentive constraint by reducing the capital stock available for extraction. The distinction is that in our model ordinary taxation removes part of the residual payment from the executive's reach. Investment need not fall for the incentive constraint to improve; under the conditions given above, it rises. Other analyses of taxation without full commitment include \citet{BenhabibRustichini1997}, \citet{PhelanStacchetti2001}, \citet{Reis2013}, \citet{Park2014}, and \citet{ScheuerWolitzky2016}. The interaction between investment and the threat of expropriation also connects our question to \citet{AguiarAmador2011}.

The institutional distinction between a rule and its subsequent enforcement relates to \citet{DovisKirpalani2021} and \citet{HalacYared2022}. We take the protected status of ordinary revenue as given and study the composition of the taxes paid into that budget. \citet{Yun2026} examines government reputation, foreign investment, corporate taxation, and profit shifting. The present paper does not include profit shifting: its comparison is between statutory taxes on safe and vulnerable factor payments at a fixed fiscal requirement.\footnote{The separate project by \citet{AblyatifovLukyanov2026} concerns government reputation and fiscal capacity. Our object here is tax composition at a given expenditure requirement, not the scale of the fiscal mandate.}

The rest of the paper is organized as follows. Section \ref{sec:environment} describes the model and the equilibrium conditions. Section \ref{sec:responses} studies the local mixing equilibrium. Section \ref{sec:composition} establishes the main tax-composition result and its implications for investment and surplus. Section \ref{sec:allocation} characterizes the constrained Ramsey allocation under full compliance. Section \ref{sec:terminal} derives a two-period continuation economy, and Section \ref{sec:dynamics} considers learning and the two patience experiments. Section \ref{sec:scope} discusses the interpretation and the limits of the results. The proofs are collected in the appendix.

\section{The model}
\label{sec:environment}

We consider a fiscal authority, an executive, and a unit mass of atomistic households. The authority must finance an exogenous expenditure \(G>0\). It can tax labor and capital income, but cannot fully prevent the executive from subsequently seizing private capital returns. The executive's type is not observed by either the authority or households. Their common prior that it is committed is \(p\in(0,1)\).

The analysis begins at a given state \((p,G)\). We first take continuation payoffs as given functions of the posterior belief about the executive. This allows us to study the current tax-composition problem without imposing a particular future economy. Section \ref{sec:terminal} subsequently derives these payoffs in a two-period example.

\subsection{Production and household choices}

Output is produced competitively using capital \(K\) and labor \(L\). Households receive the corresponding factor payments and bear the costs of supplying the two inputs. We use quasilinear utility so that a household's supply decisions depend on its marginal expected returns, without an additional wealth effect.

\begin{assumption}
\label{ass:primitives}
Production is given by \(Y=F(K,L)\). On \(\R_{++}^2\), \(F\) is twice continuously differentiable, strictly increasing, concave, strictly quasiconcave, and homogeneous of degree one. At each allocation considered, both factor shares are positive and the local elasticity of substitution \(\sigma\) is finite and strictly positive. Competitive factor prices are \(r=F_K(K,L)\) and \(w=F_L(K,L)\).

Factor-supply costs are additively separable, \(\Psi_K(K)+\Psi_L(L)\), with each function twice continuously differentiable and satisfying \(\Psi_i'>0\) and \(\Psi_i''>0\) at the interior allocations under consideration. Households take prices, aggregate statutory revenue, and any aggregate rebate as given. Define the point supply elasticities by
\begin{equation}
 \varepsilon_i
 =\frac{\Psi_i'(i)}{i\Psi_i''(i)}>0,
 \qquad i\in\{K,L\}.
 \label{eq:point-elasticities}
\end{equation}
\end{assumption}

The notation in \eqref{eq:point-elasticities} permits elasticities to vary with the allocation. We do not require isoelastic factor supply for the main result. Whenever we use \(\varepsilon_K\) or \(\varepsilon_L\) in a local comparison, it is evaluated at the allocation from which the comparison starts.

\subsection{Taxes, exposure, and timing}

The authority chooses statutory rates \((\tau_K,\tau_L)\in[0,1)^2\). For convenience, write the shares left after ordinary taxation as
\begin{equation}
 x=1-\tau_K,\qquad y=1-\tau_L.
 \label{eq:net-rates}
\end{equation}
Revenue must exactly cover the expenditure requirement:
\begin{equation}
 R\equiv\tau_KrK+\tau_LwL=G.
 \label{eq:budget}
\end{equation}
These receipts are assigned to a protected public account. They are not part of the executive's private seizure opportunity.

The sequence of decisions is as follows. First, the authority announces the tax code, knowing \((p,G)\) but not the executive's type. Households then choose capital and labor in anticipation of the executive's subsequent behavior. Production takes place, ordinary taxes are collected, and labor income is paid. Finally, the executive decides whether to respect the remaining capital payment or seize its exposed part. The public observes this decision and updates its belief.

Capital is fully depreciating project investment.\footnote{Thus the private payment at risk is the current return to the project. If the executive could also seize an undepreciated asset, its value would have to be included in the deviation payoff. The formulas below should not be read as incorporating that additional source of appropriation.} A fraction \(\xi\in(0,1]\) of its after-tax payment remains vulnerable at the executive's decision date. The additional private return from seizure is therefore
\begin{equation}
 X=\xi xrK.
 \label{eq:seizure-prize}
\end{equation}
Labor income has already been paid and is not available for seizure. The case \(\xi=1\) corresponds to full exposure of the residual capital payment; for \(\xi<1\), investors retain part of that payment even if the executive intervenes.

The committed executive, denoted by \(H\), honors the code. The opportunistic executive, denoted by \(O\), chooses whether to comply. Let \(s\) be the probability with which \(O\) complies. The overall probability of compliance is
\begin{equation}
 q=p+(1-p)s\in[p,1].
 \label{eq:q}
\end{equation}
Notice that \(q=p\) does not mean that every executive seizes the return: the committed type still complies. It means that the opportunistic type seizes with certainty.

A unit of capital earns the net return \(xr\) after compliance and \((1-\xi)xr\) after seizure. Its expected payment is consequently \(m(q)xr\), where
\begin{equation}
 m(q)=1-\xi(1-q),
 \qquad
 \zeta(q)=\frac{\dd\log m(q)}{\dd\log q}
 =\frac{\xi q}{1-\xi(1-q)}.
 \label{eq:retention}
\end{equation}
We will use \(\zeta\) to measure how responsive the expected-retention multiplier is to compliance. Household optimality gives
\begin{equation}
 \Psi_K'(K)=m(q)xr,
 \qquad
 \Psi_L'(L)=yw.
 \label{eq:household-focs}
\end{equation}
Thus labor supply is affected by the ordinary labor tax, while capital supply also depends on the likelihood of subsequent seizure. The effect on labor of a change in \(q\) operates indirectly through capital and the wage.

\subsection{Reputation and the executive's decision}

Seizure can be chosen only by the opportunist, so it reveals type \(O\). Compliance has a different informational content. The committed type complies with probability one, but the opportunist may also comply. Bayes' rule therefore gives posterior \(z=p/q\) after compliance and zero after seizure.

Let \(V_O(z)\) denote the opportunist's continuation payoff when the public's posterior is \(z\). Relative to the payoff after revelation, the gain from maintaining reputation is
\begin{equation}
 \Delta(z)=V_O(z)-V_O(0)>0,
 \qquad
 \eta(z)=\frac{z\Delta'(z)}{\Delta(z)}.
 \label{eq:delta-eta}
\end{equation}
At the positive posterior states used below, we assume that these payoff functions are continuously differentiable and that \(\Delta'(z)>0\). The elasticity \(\eta\) will be important because a change in current compliance also changes the posterior that compliance earns.

Suppose the opportunist discounts its continuation payoff by \(\beta_O\in(0,1)\). Once investment is installed, it compares the current gain \(X\) from seizure with the discounted reputational loss \(\beta_O\Delta(p/q)\). It can randomize between the two actions only if
\begin{equation}
 \xi xrK=\beta_O\Delta(p/q).
 \label{eq:mixing-ic}
\end{equation}
If \(q=p\), certain seizure by \(O\) is a best response when the left-hand side is weakly larger than the right-hand side. If \(q=1\), compliance is a best response when the inequality is reversed. Each comparison uses the allocation and posterior associated with that value of \(q\).

\begin{definition}
\label{def:equilibrium}
At a state \((p,G)\), a tax-code equilibrium consists of a statutory tax vector, an allocation, and a compliance probability satisfying exact balance \eqref{eq:budget}, household optimality \eqref{eq:household-focs}, competitive factor pricing, Bayesian updating, and the opportunist's best response in \eqref{eq:mixing-ic} or its corresponding boundary inequality. For the local results, we select a single-valued differentiable interior branch from this correspondence.
\end{definition}

This definition describes the outcomes that a tax code can support. The authority's optimal choice among them is a separate question, for which we now specify its objective. In particular, a local reform along a selected mixing branch need not be a comparison between globally optimal policies. There may also be other equilibrium branches or other compliance regimes.

\subsection{The fiscal authority}

Define current resource surplus by
\begin{equation}
 S=F(K,L)-\Psi_K(K)-\Psi_L(L).
 \label{eq:surplus}
\end{equation}
Since expenditure is fixed, its resource cost and any benefit depending only on \(G\) can be omitted from comparisons between tax codes. Let \(D=(1-q)X\) be expected diversion and let \(\lambda_D\geq0\) be the social loss per unit diverted.\footnote{With \(\lambda_D=0\), resources received by the opportunist count as a transfer within the welfare criterion. A positive \(\lambda_D\) can reflect dissipation or a lower social weight on the executive's private receipts. The investment result does not rely on choosing a positive diversion loss.}

The authority must also account for the information carried into the continuation. For a social continuation value \(\cV\), its expected value is
\begin{equation}
 \cJ(q;p)=q\cV(p/q)+(1-q)\cV(0).
 \label{eq:expected-continuation}
\end{equation}
If \(\omega\) is the weight on continuation, the authority evaluates a tax-code equilibrium by
\begin{equation}
 \mathcal W=S-\lambda_DD+\omega\cJ(q;p).
 \label{eq:authority-objective}
\end{equation}
It chooses among the fiscally feasible equilibria under the maintained selection. We initially allow \(\omega\) to vary separately from \(\beta_O\), so that a change in the authority's objective can be distinguished from a change in the executive's incentives. Section \ref{sec:dynamics} considers a common discount factor.

\section{Regular mixing equilibria}
\label{sec:responses}

Before comparing the two taxes, we need to understand how the private economy adjusts to a given code. An increase in a tax rate changes the return to supplying an input. That response affects factor prices and the executive's gain from seizure. In a mixing equilibrium, the probability of compliance must then adjust until the executive is indifferent again. This section derives these responses and states the conditions under which they describe a locally unique equilibrium.

\subsection{Factor supply at a given probability of compliance}

For any positive variable \(a\), let \(\wh a=\dd\log a\) denote its proportional change. Write
\begin{equation}
 \theta=\frac{rK}{Y}\in(0,1)
 \label{eq:capital-share}
\end{equation}
for capital's share of output. Constant returns imply that factor prices depend on the capital--labor ratio. Using the local elasticity of substitution, their proportional changes are
\begin{equation}
 \wh r=-\frac{1-\theta}{\sigma}(\wh K-\wh L),
 \qquad
 \wh w=\frac{\theta}{\sigma}(\wh K-\wh L).
 \label{eq:factor-price-responses}
\end{equation}
It is convenient to combine ordinary taxation and expected seizure into the two return multipliers
\begin{equation}
 u=m(q)x,\qquad v=y,
 \label{eq:return-shifters}
\end{equation}
and define
\begin{equation}
 \mathcal D
 =1+\frac{\varepsilon_K(1-\theta)+\varepsilon_L\theta}{\sigma}.
 \label{eq:D}
\end{equation}

\begin{lemma}
\label{lem:factor-responses}
At an interior competitive allocation satisfying Assumption \ref{ass:primitives}, the responses of factor supplies to \(u\) and \(v\) satisfy
\begin{equation}
 \begin{pmatrix}\wh K\\ \wh L\end{pmatrix}
 =
 \begin{pmatrix}c&d\\ e&f\end{pmatrix}
 \begin{pmatrix}\wh u\\ \wh v\end{pmatrix},
 \label{eq:factor-matrix}
\end{equation}
where
\begin{align}
 c&=\frac{\varepsilon_K(1+\varepsilon_L\theta/\sigma)}{\mathcal D},
 &
 d&=\frac{\varepsilon_K\varepsilon_L(1-\theta)}
 {\sigma\mathcal D},
 \notag\\
 e&=\frac{\varepsilon_K\varepsilon_L\theta}
 {\sigma\mathcal D},
 &
 f&=\frac{\varepsilon_L[1+\varepsilon_K(1-\theta)/\sigma]}
 {\mathcal D}.
 \label{eq:factor-coefficients}
\end{align}
The determinant satisfies \(cf-de=\varepsilon_K\varepsilon_L/\mathcal D\). Writing gross factor payments as \(I_K=rK\) and \(I_L=wL\), we also have
\begin{equation}
 \wh I_K=A\wh u+B\wh v,
 \qquad
 \wh I_L=C\wh u+E\wh v,
 \label{eq:income-responses}
\end{equation}
where
\begin{equation}
 A=\left(1+\frac1{\varepsilon_K}\right)c-1,\quad
 B=\left(1+\frac1{\varepsilon_K}\right)d,\quad
 C=\left(1+\frac1{\varepsilon_L}\right)e,\quad
 E=\left(1+\frac1{\varepsilon_L}\right)f-1.
 \label{eq:ABCE}
\end{equation}
These coefficients satisfy \(A+1>0\) and \(B>0\). The coefficient \(A\) can have either sign.
\end{lemma}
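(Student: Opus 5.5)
Log-linearize the household first-order conditions \eqref{eq:household-focs} together with the factor-price responses \eqref{eq:factor-price-responses}, and solve the resulting two-by-two linear system.

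\emph{The linear system.} By the definition \eqref{eq:point-elasticities}, \(\dd\log\Psi_i'(i)=\wh i/\varepsilon_i\). With \(u=m(q)x\) and \(v=y\), the conditions \eqref{eq:household-focs} become
\begin{equation*}
 \wh K/\varepsilon_K=\wh u+\wh r,
 \qquad
 \wh L/\varepsilon_L=\wh v+\wh w .
\end{equation*}
Substituting \eqref{eq:factor-price-responses} gives
\begin{align*}
 \Bigl(\tfrac1{\varepsilon_K}+\tfrac{1-\theta}{\sigma}\Bigr)\wh K-\tfrac{1-\theta}{\sigma}\wh L&=\wh u,\\
 -\tfrac{\theta}{\sigma}\wh K+\Bigl(\tfrac1{\varepsilon_L}+\tfrac{\theta}{\sigma}\Bigr)\wh L&=\wh v .
\end{align*}
The coefficient matrix \(M\) has determinant
\begin{equation*}
 \frac1{\varepsilon_K\varepsilon_L}+\frac{\theta}{\sigma\varepsilon_K}+\frac{1-\theta}{\sigma\varepsilon_L},
\end{equation*}
because the \(\theta(1-\theta)/\sigma^2\) terms cancel. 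This determinant equals \(\mathcal D/(\varepsilon_K\varepsilon_L)\), which is positive under Assumption \ref{ass:primitives}. Hence \(M\) is invertible.

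\emph{Reading off the coefficients.} Inverting \(M\) by the adjugate formula and multiplying numerator and denominator by \(\varepsilon_K\varepsilon_L\) gives \(c,d,e,f\) exactly as in \eqref{eq:factor-coefficients}. For example, \(c=\varepsilon_K\varepsilon_L(1/\varepsilon_L+\theta/\sigma)/\mathcal D\). The determinant of the inverse matrix is the reciprocal of \(\det M\), namely \(\varepsilon_K\varepsilon_L/\mathcal D\). This proves the claim \(cf-de=\varepsilon_K\varepsilon_L/\mathcal D\) without multiplying out the entries.

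\emph{Income responses.} The first-order condition gives \(\wh r=\wh K/\varepsilon_K-\wh u\), so
\begin{equation*}
 \wh I_K=\wh r+\wh K=(1+1/\varepsilon_K)\wh K-\wh u .
\end{equation*}
Substituting \(\wh K=c\wh u+d\wh v\) yields \(A=(1+1/\varepsilon_K)c-1\) and \(B=(1+1/\varepsilon_K)d\). In the same way, \(\wh I_L=(1+1/\varepsilon_L)\wh L-\wh v\), and substituting \(\wh L=e\wh u+f\wh v\) gives \(C\) and \(E\).

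\emph{Signs.} Since \(c,d>0\) and \(\varepsilon_K,\sigma,\theta\in(0,1)\)-type quantities are positive, \(B>0\) and \(A+1=(1+1/\varepsilon_K)c>0\) are immediate.

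To show that \(A\) can take either sign, compute
\begin{equation*}
 (1+1/\varepsilon_K)c=\frac{(1+\varepsilon_K)(1+\varepsilon_L\theta/\sigma)}{\mathcal D}
\end{equation*}
and compare it with one. Take \(\varepsilon_L\to0\): the ratio becomes \((1+\varepsilon_K)/(1+\varepsilon_K(1-\theta)/\sigma)\). This exceeds one when \(\sigma>1-\theta\), giving \(A>0\), and falls below one when \(\sigma<1-\theta\), giving \(A<0\). Both configurations are admissible under Assumption \ref{ass:primitives}, so neither sign can be ruled out.

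The only step needing care is the determinant computation. Getting the cancellation of the \(\theta(1-\theta)/\sigma^2\) terms right is what delivers the common denominator \(\mathcal D\) and the stated determinant. Everything after that is substitution.
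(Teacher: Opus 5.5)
Your proof is correct and follows the paper's route. You log-linearize the household conditions, substitute \eqref{eq:factor-price-responses}, and invert the $2\times2$ system; your rows are the paper's divided by $\varepsilon_K$ and $\varepsilon_L$, so your determinant is $\mathcal D/(\varepsilon_K\varepsilon_L)$ rather than $\mathcal D$. You then reuse the first-order conditions to get $\wh I_K=(1+1/\varepsilon_K)\wh K-\wh u$ and $\wh I_L=(1+1/\varepsilon_L)\wh L-\wh v$.

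Your explicit check that $A$ can take either sign goes slightly beyond the paper, which only remarks that its argument does not sign $A$. No limit is needed: $A=\varepsilon_K[\sigma-(1-\theta)+\theta\varepsilon_L]/(\sigma\mathcal D)$, so $A>0$ exactly when $\sigma+\theta\varepsilon_L>1-\theta$, and $A<0$ when the inequality is reversed.
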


The positive entries in the factor-response matrix have a familiar interpretation. A higher expected return to capital encourages capital supply, and its effect on the wage also encourages labor supply. A higher net wage encourages labor supply and, through the return to capital, also encourages investment. Gross capital income need not increase with every increase in its return multiplier, because the marginal product changes as well; this explains why the sign of \(A\) is left unrestricted.

\subsection{The executive's response and regularity}

We now allow compliance to adjust. Since \(\wh u=\wh x+\zeta\wh q\), the proportional change in the seizure prize is
\[
 \wh X=A\zeta\wh q+(A+1)\wh x+B\wh y.
\]
At the same time, \eqref{eq:mixing-ic} requires \(\wh X=-\eta\wh q\) when \(p\), \(\xi\), \(\beta_O\), and \(\Delta\) are fixed. Equating the two expressions gives
\begin{equation}
 (A\zeta+\eta)\wh q=-(A+1)\wh x-B\wh y.
 \label{eq:compliance-response}
\end{equation}

To see what \eqref{eq:compliance-response} means, first hold the labor tax fixed. An increase in \(\tau_K\) lowers \(x\), reducing the payment available for seizure. On a branch where \(A\zeta+\eta>0\), compliance rises. The posterior after compliance then falls, reducing the reputational reward until the executive is indifferent again. A revenue-neutral reform also changes the labor tax, so this fixed-code derivative is only the first step in our comparison.

\begin{proposition}
\label{prop:mixing-existence}
Consider a fixed tax code with positive output. Suppose that for every \(q\in[p,1]\), household optimality has a unique solution \((K(q),L(q))\), continuous in \(q\) and differentiable on the interior. Define the difference between the logarithms of the seizure prize and the discounted continuation gain by
\begin{equation}
 \Phi(q)
 =
 \log[\xi xr(q)K(q)]
 -\log[\beta_O\Delta(p/q)].
 \label{eq:log-compliance-gap}
\end{equation}
If
\begin{equation}
 \Phi(p)<0<\Phi(1),
 \qquad
 A(q)\zeta(q)+\eta(p/q)>0
 \quad\text{for all }q\in(p,1),
 \label{eq:mixing-existence-conditions}
\end{equation}
then the mixing incentive constraint has a unique solution \(q^*\in(p,1)\). This solution is continuously differentiable in the code and prior locally wherever the primitives have the required continuous derivatives. If exact balance also holds and both total revenue derivatives are positive, the equilibrium lies on a regular local \(C^1\) balanced-budget mixing contour.
\end{proposition}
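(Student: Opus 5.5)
The plan is to treat \(\Phi\) as a scalar function of \(q\) on \([p,1]\) and get existence from the intermediate value theorem and uniqueness from strict monotonicity. Continuity of \(\Phi\) on \([p,1]\) follows from the assumed continuity of \((K(q),L(q))\): then \(r(q)=F_K(K(q),L(q))\) is continuous, and \(\Delta(p/q)\) is continuous and positive for \(q\in[p,1]\), because the posterior \(p/q\) ranges over \([p,1]\), a set of positive states. Positive output keeps \(rK\) positive, so the logarithm is defined. Since \(\Phi(p)<0<\Phi(1)\), there is some \(q^*\in(p,1)\) with \(\Phi(q^*)=0\). This is exactly \eqref{eq:mixing-ic}.

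For uniqueness I would differentiate \(\Phi\) on the interior. The code \((x,y)\) is held fixed, so \(\wh v=0\) and \(\wh u=\zeta\wh q\). Lemma \ref{lem:factor-responses} then gives \(\dd\log(rK)=A\zeta\,\dd\log q\). Since \(\dd\log\Delta(p/q)=-\eta(p/q)\,\dd\log q\), we obtain
\[
 q\,\Phi'(q)=A(q)\zeta(q)+\eta(p/q)>0
 \quad\text{on }(p,1).
\]
So \(\Phi\) is strictly increasing on \((p,1)\), and by continuity on \([p,1]\). Its zero is therefore unique, and the endpoints are excluded by the sign conditions.

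For smoothness in the code and prior I would apply the implicit function theorem to \(\Phi(q;x,y,p)=0\) at \(q^*\). Differentiability of \(\Phi\) in \((x,y,p)\) comes from the household system \eqref{eq:household-focs}. Its Jacobian in \((K,L)\) is nonsingular, since the determinant in Lemma \ref{lem:factor-responses} is \(\varepsilon_K\varepsilon_L/\mathcal D\neq0\). This makes \((K,L)\) locally \(C^1\) in \((u,v)\) whenever \(F\) and \(\Psi_i\) are \(C^2\). The nondegeneracy needed is \(\partial_q\Phi(q^*)\neq0\), which was just shown, so \(q^*\) is locally \(C^1\) in \((x,y,p)\). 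The implicit derivative coincides with \eqref{eq:compliance-response}.

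The last claim concerns the balanced-budget contour. Substituting \(q^*(x,y)\) and the induced allocation into \(R(x,y)-G\) gives a \(C^1\) function of the two rates. The hypothesis that both total revenue derivatives are positive, meaning derivatives with respect to \(\tau_K\) and \(\tau_L\) that include the response of \(q^*\), makes the gradient nonzero. The implicit function theorem then gives a regular one-dimensional \(C^1\) contour through the equilibrium, locally expressible as \(\tau_L\) as a function of \(\tau_K\) with negative slope.

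The main obstacle is making sure the derivative computation for \(\Phi\) is legitimate uniformly on \((p,1)\). Lemma \ref{lem:factor-responses} is stated at a single interior allocation, with \(A\) evaluated there. I would check that the uniqueness-and-differentiability hypothesis on \((K(q),L(q))\) coincides with the local linearization at each \(q\), so that \(A(q)\) is well defined pointwise. I would also check that monotonicity on the open interval extends to the endpoints through continuity.
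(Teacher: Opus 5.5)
Your proposal is correct and follows essentially the same route as the paper's proof: the intermediate value theorem gives existence, \(\dd\Phi/\dd\log q=A\zeta+\eta>0\) gives uniqueness, the implicit function theorem gives local smoothness, and \(\cR_L>0\) lets you solve the budget locally for \(\tau_L\) as a function of \(\tau_K\) with slope \(-\cR_K/\cR_L<0\). Your additional checks make explicit what the paper leaves implicit: continuity and positivity of \(\Delta\) on \([p,1]\), extension of monotonicity to the endpoints, and nonsingularity of the household system. On the last point, the right justification is \(\mathcal D>0\), the determinant of the coefficient matrix of the linearized system, not the response-matrix determinant \(\varepsilon_K\varepsilon_L/\mathcal D\), since that matrix already presupposes invertibility.
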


\begin{definition}
\label{def:regular}
Let
\begin{equation}
 \cR_K=\left.\frac{\dd R}{\dd\tau_K}\right|_{\tau_L,p},
 \qquad
 \cR_L=\left.\frac{\dd R}{\dd\tau_L}\right|_{\tau_K,p}
 \label{eq:revenue-derivatives}
\end{equation}
denote the total revenue derivatives, allowing factor supplies, prices, and compliance to respond. An interior mixing equilibrium is called \emph{regular} when
\begin{equation}
 A\zeta+\eta>0,\qquad \cR_K>0,\qquad \cR_L>0.
 \label{eq:regularity}
\end{equation}
\end{definition}

The first inequality ensures that the feedback between investment and compliance has the orientation used above. The two revenue inequalities mean that either tax, considered separately, still raises revenue after all equilibrium responses are included. Consequently, keeping revenue fixed requires one tax to fall when the other rises.

Proposition \ref{prop:mixing-existence} provides sufficient conditions for such a mixing response to exist and be locally unique. It is not a uniqueness theorem for the authority's full policy problem. In particular, it does not rule out other tax codes, other selected branches, or optimal policies at the boundary of a compliance regime.

\section{The safe--vulnerable composition theorem}
\label{sec:composition}

Let us now consider a reform that raises the capital tax and uses the additional receipts to lower the labor tax. The expenditure requirement is unchanged. At a regular mixing point, the two tax rates can be varied in this way because each raises total revenue at the margin.

There are two reasons to expect the reform to affect the executive's incentives. A higher capital tax directly reduces the return left available for seizure. A lower labor tax, however, changes labor supply and, through production, the return on installed capital. It is not apparent that the first effect should dominate. The following result shows that the comparison is governed by a condition on labor taxation alone.

\begin{theorem}
\label{thm:partial-vulnerability}
At a regular mixing point,
\begin{equation}
 \operatorname{sgn}\left(
 \left.\frac{\dd\log q}{\dd\tau_K}\right|_{R=G}
 \right)
 =
 \operatorname{sgn}\left[1-(1+\varepsilon_L)\tau_L\right].
 \label{eq:composition-sign}
\end{equation}
More precisely,
\begin{equation}
 \left.\frac{\dd\log q}{\dd\tau_K}\right|_{R=G}
 =
 \frac{(1+\varepsilon_K)(1-\theta)
 [1-(1+\varepsilon_L)\tau_L]}
 {\mathcal D(A\zeta+\eta)xy(\cR_L/Y)}.
 \label{eq:composition-level}
\end{equation}
Therefore, if
\begin{equation}
 0<\tau_L<\frac{1}{1+\varepsilon_L},
 \label{eq:labor-threshold}
\end{equation}
a feasible local revenue-neutral increase in the capital-tax rate, accompanied by a
decrease in the labor-tax rate, strictly raises compliance.  At \(\tau_L=0\),
\eqref{eq:composition-level} remains the derivative of the extended revenue contour, but
that direction is blocked by the statutory instrument boundary.
\end{theorem}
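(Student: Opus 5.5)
The plan is to reduce the revenue-neutral derivative to a single linear-algebra expression. Then I evaluate the one revenue change that matters along the direction in which compliance is held fixed. All ingredients are already available: the compliance response \eqref{eq:compliance-response}, the income responses of Lemma \ref{lem:factor-responses}, and the regularity conditions \eqref{eq:regularity}.

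First, write \(\wh x=-\dd\tau_K/x\) and \(\wh y=-\dd\tau_L/y\). Then \eqref{eq:compliance-response} gives
\[
(A\zeta+\eta)\,\dd\log q=\frac{A+1}{x}\dd\tau_K+\frac{B}{y}\dd\tau_L\equiv \alpha\,\dd\tau_K+\beta\,\dd\tau_L .
\]
By Proposition \ref{prop:mixing-existence} and \(\cR_L>0\), the balanced-budget contour is locally \(C^1\) with \(\dd\tau_L/\dd\tau_K=-\cR_K/\cR_L\). Hence
\[
(A\zeta+\eta)\,\frac{\dd\log q}{\dd\tau_K}\Big|_{R=G}=\frac{\alpha\cR_L-\beta\cR_K}{\cR_L}.
\]

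Next, I compute the numerator \(\alpha\cR_L-\beta\cR_K\) without ever solving for the full \(\cR_K,\cR_L\), which contain the messy \(q\)-feedback terms. The observation is that \(\alpha\cR_L-\beta\cR_K\) equals, up to the factor \(1/(xy)\), the total revenue change along the direction \((\dd\tau_K,\dd\tau_L)=(Bx,-(A+1)y)\). Along this direction \(\alpha\dd\tau_K+\beta\dd\tau_L=0\), so \(\wh q=0\) and all compliance-induced terms in revenue drop out. Along it \(\wh u=\wh x=-B\) and \(\wh v=\wh y=A+1\). So \eqref{eq:income-responses} gives \(\wh I_K=-AB+B(A+1)=B\) and \(\wh I_L=-CB+E(A+1)\). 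Differentiating \(R=\tau_KI_K+\tau_LI_L\) then yields
\[
\dd R=I_K B(x+\tau_K)-I_L(A+1)y+\tau_L I_L[E(A+1)-CB]
=I_KB-I_L\big[(A+1)y-\tau_L(E(A+1)-CB)\big].
\]

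The third step, which I expect to be the main obstacle, is pure algebra. I substitute \eqref{eq:factor-coefficients}–\eqref{eq:ABCE}, use \(I_K/Y=\theta\) and \(I_L/Y=1-\theta\), and invoke the determinant identity \(cf-de=\varepsilon_K\varepsilon_L/\mathcal D\) to simplify \(E(A+1)-CB\). I would first verify that \(I_KB-I_L(A+1)\) collapses to \(-Y(1+\varepsilon_K)(1-\theta)/\mathcal D\) times a simple factor. Then I would check that the \(\tau_L\)-term contributes exactly \((1+\varepsilon_L)\tau_L\) times the same factor. In the target, the \(\tau_K\)-dependence has disappeared because \(x+\tau_K=1\). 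The overall sign must be reconciled so that the result matches \eqref{eq:composition-level}, with \((1+\varepsilon_K)(1-\theta)[1-(1+\varepsilon_L)\tau_L]/\mathcal D\) in the numerator and \(xy\) in the denominator; I would check this sign in the Cobb–Douglas, \(\tau_L=0\) case.

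Finally, dividing by \((A\zeta+\eta)\cR_L\) gives \eqref{eq:composition-level}. Since \(A\zeta+\eta>0\), \(\cR_L>0\), \(\mathcal D>0\), \(x,y>0\), \(\theta<1\) and \(\varepsilon_K>0\), the sign statement \eqref{eq:composition-sign} follows. Under \eqref{eq:labor-threshold}, increasing \(\tau_K\) along the contour lowers \(\tau_L\) because \(\cR_K,\cR_L>0\); since \(\tau_L>0\), the move stays inside \([0,1)^2\) locally and strictly raises \(q\). At \(\tau_L=0\) the same computation applies to the extended contour, but the required decrease in \(\tau_L\) leaves the admissible set.
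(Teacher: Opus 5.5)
Your argument is correct and reaches \eqref{eq:composition-level}, but it organizes the key computation differently from the paper. The paper computes \(\cR_K/Y\) and \(\cR_L/Y\) in full, including the compliance feedback (\eqref{eq:A-RK}--\eqref{eq:A-RL}). It then forms \(Q=y(A+1)\cR_L/Y-xB\cR_K/Y\), observes that the terms in \(\zeta\) and \(\eta\) cancel, and applies the two coefficient identities.

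You instead recognize \(Q\) as a rescaled total revenue derivative along the tax direction \((Bx,-(A+1)y)\), along which \(q\) does not move. By linearity of the equilibrium responses, this is a fixed-compliance calculation, so the cancellation is built in rather than found after the fact. Your route makes clear why the retention and continuation elasticities cannot enter the sign. The theorem compares the slopes of the iso-revenue and iso-compliance contours in \((\tau_K,\tau_L)\)-space, and that comparison is settled by the revenue change along the iso-compliance line. That is a fixed-\(q\) object. The independence from \(\sigma\) and \(\varepsilon_K\) still comes only from the coefficient identities, exactly as in the paper. The paper's route, in exchange, gives explicit expressions for the revenue derivatives that appear in the regularity conditions and in the denominator of the level formula; your argument treats these only as given positive numbers.

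One correction: the proportionality factor is \(-1/(xy)\), not \(1/(xy)\), because the revenue change along your direction is \(Bx\cR_K-(A+1)y\cR_L=-xy(\alpha\cR_L-\beta\cR_K)\). With that sign, your computed \(\dd R=-Y(1+\varepsilon_K)(1-\theta)[1-(1+\varepsilon_L)\tau_L]/\mathcal D\) delivers exactly the paper's positive numerator. The sign is therefore fixed algebraically, and no Cobb--Douglas check is needed. When you expand \((A+1)y=(A+1)-\tau_L(A+1)\), the \(\tau_L\)-bracket becomes \((A+1)(E+1)-BC\), which is precisely the paper's identity \eqref{eq:A-identity-two}.
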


The condition has a familiar interpretation. At a fixed wage, a small increase in the labor tax raises labor-tax revenue precisely when \(1-(1+\varepsilon_L)\tau_L>0\). This same expression determines whether shifting revenue toward capital raises compliance. The elasticity of substitution and the capital-supply elasticity affect the size of the response, but do not affect its sign.

This does not mean that general-equilibrium effects can be ignored. They enter the denominator of \eqref{eq:composition-level}, and regularity requires that denominator to be positive. In particular, the fixed-wage labor-revenue condition and the assumption that total equilibrium revenue increases with the labor tax are different restrictions.\footnote{When labor supply is isoelastic, the fixed-wage revenue peak is \(1/(1+\varepsilon_L)\). With general supply costs, \(\varepsilon_L\) is evaluated at the current allocation. The inequality in \eqref{eq:labor-threshold} is then a local condition, not a constant upper bound on admissible labor taxes.}

One can extend the comparison along a segment of the budget contour, provided these restrictions continue to hold.

\begin{corollary}
\label{cor:connected-segment}
On any connected \(C^1\) mixing segment parameterized by \(\chi\), with
\(\dd\tau_K/\dd\chi>0\) and \(\dd\tau_L/\dd\chi<0\), if regularity and
\eqref{eq:labor-threshold} hold pointwise, then \(\dd q/\dd\chi>0\).  The statement
ends at full compliance, certain betrayal, an instrument boundary, a fold, or any failure
of the maintained inequalities.
\end{corollary}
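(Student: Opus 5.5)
The plan is to reduce the corollary to a pointwise application of Theorem \ref{thm:partial-vulnerability} along the segment, followed by a one-dimensional monotonicity argument. I read a ``mixing segment'' as a $C^1$ curve $\chi\mapsto(\tau_K(\chi),\tau_L(\chi))$ lying on the balanced-budget contour $R=G$. Along it, compliance is the interior solution $q^*(\tau_K,\tau_L;p)$ of \eqref{eq:mixing-ic}.

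\textbf{Step 1: local structure at each point.} At each point of the segment, regularity gives $A\zeta+\eta>0$, so by Proposition \ref{prop:mixing-existence} the map $q^*$ is locally $C^1$ in the code. Regularity also gives $\cR_L>0$. By the implicit function theorem applied to $R(\tau_K,\tau_L)=G$, where $R$ already incorporates the induced response of $q^*$, the contour is locally a $C^1$ graph $\tau_L=g(\tau_K)$ with $g'=-\cR_K/\cR_L<0$. This sign is consistent with the hypotheses $\dd\tau_K/\dd\chi>0$ and $\dd\tau_L/\dd\chi<0$. Because $\dd\tau_K/\dd\chi>0$, the segment near each point is a reparameterization of this graph. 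The chain rule then gives
\[
\frac{\dd q}{\dd\chi}
= q\left.\frac{\dd\log q}{\dd\tau_K}\right|_{R=G}\frac{\dd\tau_K}{\dd\chi}.
\]
The composite $\chi\mapsto q^*(\tau_K(\chi),\tau_L(\chi))$ is $C^1$, so this total derivative coincides with the contour derivative defined in the theorem.

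\textbf{Step 2: signing the derivative.} The factor $q$ is positive because $q\ge p>0$. The factor $\dd\tau_K/\dd\chi$ is positive by hypothesis. By \eqref{eq:composition-level}, the middle factor has the sign of $1-(1+\varepsilon_L)\tau_L$, since the denominator $\mathcal D(A\zeta+\eta)xy\,\cR_L/Y$ is positive under regularity and because $x,y>0$ for rates in $[0,1)$. The threshold \eqref{eq:labor-threshold}, imposed pointwise, makes this factor strictly positive. Hence $\dd q/\dd\chi>0$ at every point of the segment, and $q$ is strictly increasing on the connected segment by the mean value theorem. The strict inequality $\tau_L>0$ in \eqref{eq:labor-threshold} also keeps every point away from the instrument boundary, so the direction is feasible throughout.

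\textbf{Step 3: where the argument stops.} Each listed endpoint is exactly a place where one ingredient above fails:
\begin{itemize}
\item at $q=1$ or $q=p$ the interior indifference \eqref{eq:mixing-ic} no longer defines $q$;
\item at $\tau_L=0$ the threshold's strict lower bound fails;
\item at a fold, $\cR_L\to0$ or $A\zeta+\eta\to0$, so the implicit-function step breaks;
\item elsewhere, regularity or \eqref{eq:labor-threshold} simply ceases to hold.
\end{itemize}

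\textbf{Main obstacle.} I expect the main difficulty to be Step 1. The argument must justify that the derivative along an arbitrary $C^1$ parameterization equals the theorem's contour derivative, which is defined holding $R=G$ with all equilibrium responses included. This requires that the selected branch of $q^*$ be the same single-valued $C^1$ branch along the whole segment. My first approach is to cover the compact pieces of the segment by the neighborhoods supplied by Proposition \ref{prop:mixing-existence} and use local uniqueness to glue the branches together. Connectedness then rules out switching to a different branch.
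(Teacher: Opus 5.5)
Your proposal is correct and matches the paper's proof: apply Theorem \ref{thm:partial-vulnerability} at each point of the segment, use $\dd\tau_K/\dd\chi>0$ to carry the sign of the contour derivative over to $\dd q/\dd\chi$, and integrate along the connected segment. Your added care about the chain rule and about gluing the locally unique $C^1$ branches is sound but goes beyond what the paper writes, since it simply takes the selected $C^1$ mixing segment as given.
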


Thus, a sequence of small capitalward reforms raises compliance for as long as the economy remains on the same regular mixing segment. It need not do so after the segment ends. In particular, the theorem does not compare a policy under which the opportunist always betrays with one under which it always complies. We return to the latter regime in Section \ref{sec:allocation}.

\subsection{When the reform also raises investment}

Higher compliance is not, by itself, sufficient for higher investment. Investors receive the after-tax capital return more often, but the statutory claim on that return has also increased. To determine which effect dominates, it is useful to combine the household's capital-supply condition with the executive's indifference condition. This gives an identity that does not depend on the production function.

\begin{proposition}
\label{prop:partial-investment}
Every perturbation that holds \(p\), \(\xi\), \(\beta_O\), and the continuation schedule \(\Delta\) fixed, and remains in the interior mixing regime, satisfies
\begin{equation}
 \dd\log K
 =
 \frac{\varepsilon_K}{1+\varepsilon_K}
 (\zeta-\eta)\dd\log q.
 \label{eq:investment-identity}
\end{equation}
Along a compliance-enhancing reform, installed capital rises if and only if
\(\zeta>\eta\).
\end{proposition}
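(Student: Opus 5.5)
The identity follows from combining the capital-supply condition in \eqref{eq:household-focs} with the mixing condition \eqref{eq:mixing-ic}, without using the production function. The idea is to eliminate the unknown combination $xr$, which depends on both the tax code and technology, between the two conditions.

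\emph{Capital supply.} Log-differentiating $\Psi_K'(K)=m(q)xr$ and using the point elasticity \eqref{eq:point-elasticities} gives
\[
\frac{1}{\varepsilon_K}\,\dd\log K=\zeta\,\dd\log q+\dd\log(xr),
\]
where $\zeta$ comes from \eqref{eq:retention}.

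\emph{Mixing condition.} With $p$, $\xi$, $\beta_O$ and $\Delta$ fixed, log-differentiating \eqref{eq:mixing-ic} gives
\[
\dd\log(xr)+\dd\log K=\eta\,\dd\log(p/q)=-\eta\,\dd\log q.
\]
This is the same relation used to obtain \eqref{eq:compliance-response}, namely $\wh X=-\eta\wh q$ with $\wh\xi=0$.

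\emph{Eliminating $xr$.} Solve the mixing relation for $\dd\log(xr)=-\eta\,\dd\log q-\dd\log K$ and substitute it into the capital-supply relation. This yields
\[
\Bigl(\tfrac{1}{\varepsilon_K}+1\Bigr)\dd\log K=(\zeta-\eta)\,\dd\log q,
\]
and multiplying through by $\varepsilon_K/(1+\varepsilon_K)$ gives \eqref{eq:investment-identity}.

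\emph{Hypotheses needed.} Differentiability of $\Psi_K$, $\Delta$ and of the selected branch (Definition \ref{def:equilibrium} and Proposition \ref{prop:mixing-existence}) makes the log-differentials meaningful. Positivity of $X$, $\Delta$, $q$ and $K$ in the interior mixing regime allows taking logarithms. Any movement in $\tau_L$, $G$ or the technology enters only through $x$, $r$ and $K$, so the identity holds for every such perturbation. The production function, $\sigma$ and $\varepsilon_L$ never appear, consistent with the claim that the identity is technology-free.

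\emph{The iff statement.} Along a compliance-enhancing reform, $\dd\log q>0$. Since $\varepsilon_K/(1+\varepsilon_K)>0$, the sign of $\dd\log K$ equals the sign of $\zeta-\eta$. Hence investment rises if and only if $\zeta>\eta$.

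The only delicate point is bookkeeping. One must make sure that the posterior $p/q$ moves only through $q$, since $p$ is held fixed. One must also make sure that the same $xr$ appears in both equations, i.e.\ that the after-tax unit return, not gross income, is what enters both $X=\xi xrK$ and the FOC. Both hold by construction, so I expect no real obstacle.
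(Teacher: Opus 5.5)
Your proposal is correct and takes essentially the paper's approach. The paper substitutes $xr=\Psi_K'(K)/m(q)$ into the prize to get $\wh X=(1+1/\varepsilon_K)\wh K-\zeta\wh q$ and equates this with $\wh X=-\eta\wh q$; that is your elimination of $\dd\log(xr)$, done with substitution before differentiation.
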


The two elasticities in \eqref{eq:investment-identity} measure different parts of the adjustment. The first, \(\zeta\), measures how much a rise in compliance improves the expected retention of capital income. The second, \(\eta\), measures how much the resulting deterioration in the posterior reduces the opportunist's continuation gain. Indifference requires the current prize from betrayal to fall with that gain. Investment rises if the improvement in expected retention is sufficiently large relative to this reduction.

When all residual capital income is exposed, \(\zeta=1\), and the condition is simply \(\eta<1\). Partial exposure does not eliminate the mechanism. It makes the condition more demanding, since an increase in compliance then protects a smaller share of the investor's return.

The labor-tax reduction gives a further benefit. If capital does not fall, labor supply rises, as do output and current surplus.

\begin{corollary}
\label{cor:current-surplus}
Suppose the reform in Theorem \ref{thm:partial-vulnerability} lowers \(\tau_L\), raises
\(q\), and satisfies \(0<\tau_L<1/(1+\varepsilon_L)\) and \(\zeta\geq\eta\).  Then
\begin{equation}
 \dd K\geq0,\qquad \dd L>0,\qquad \dd Y>0,
 \label{eq:allocation-signs}
\end{equation}
and
\begin{equation}
 \dd S
 =r[1-m(q)x]\,\dd K+w\tau_L\,\dd L>0.
 \label{eq:surplus-change}
\end{equation}
\end{corollary}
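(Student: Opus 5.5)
The plan is to obtain the sign of \(\dd K\) from Proposition \ref{prop:partial-investment}, then the sign of \(\dd L\) from the labor first-order condition in \eqref{eq:household-focs} together with the wage response \eqref{eq:factor-price-responses}. Output and surplus then follow by differentiating \(F\) and \(S\) and using household optimality.

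\emph{Capital.} The reform in Theorem \ref{thm:partial-vulnerability} holds \(p\), \(\xi\), \(\beta_O\) and \(\Delta\) fixed and stays on the interior mixing branch. Identity \eqref{eq:investment-identity} therefore applies and gives \(\dd\log K=\frac{\varepsilon_K}{1+\varepsilon_K}(\zeta-\eta)\dd\log q\). Since \(q\) rises, \(\varepsilon_K>0\) and \(\zeta\ge\eta\), we get \(\dd K\ge0\).

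\emph{Labor.} Log-differentiate \(\Psi_L'(L)=yw\). By definition \eqref{eq:point-elasticities}, \(\dd\log\Psi_L'(L)=\wh L/\varepsilon_L\). Combining this with \(\wh w=\frac{\theta}{\sigma}(\wh K-\wh L)\) from \eqref{eq:factor-price-responses} gives
\[
\Bigl(\frac{1}{\varepsilon_L}+\frac{\theta}{\sigma}\Bigr)\wh L=\wh y+\frac{\theta}{\sigma}\wh K .
\]
The reform strictly lowers \(\tau_L\), so \(\wh y=-\dd\tau_L/y>0\). We already have \(\wh K\ge0\), and the coefficient on \(\wh L\) is positive. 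Hence \(\wh L>0\). This is the one place where strictness has to be tracked carefully: it comes entirely from the strict fall in \(\tau_L\), and not from any property of capital. By constant returns, \(\wh Y=\theta\wh K+(1-\theta)\wh L\), and with \(\theta\in(0,1)\) this gives \(\dd Y>0\).

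\emph{Surplus.} Differentiate \eqref{eq:surplus} to get \(\dd S=(F_K-\Psi_K')\dd K+(F_L-\Psi_L')\dd L\). Substitute \(F_K=r\), \(F_L=w\), and the first-order conditions \(\Psi_K'=m(q)xr\) and \(\Psi_L'=yw\). This gives exactly \(\dd S=r[1-m(q)x]\dd K+w\tau_L\dd L\), which is \eqref{eq:surplus-change}. To sign it:
\begin{itemize}
\item Since \(m(q)\le1\) and \(x\le1\), the first term is nonnegative, because \(\dd K\ge0\).
\item The second term is strictly positive, because \(w>0\), \(\tau_L>0\) by hypothesis, and \(\dd L>0\).
\end{itemize}
Therefore \(\dd S>0\). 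The hypothesis \(\tau_L<1/(1+\varepsilon_L)\) is used only through Theorem \ref{thm:partial-vulnerability}, which guarantees that \(q\) rises along the reform.
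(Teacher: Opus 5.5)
Your proof is correct and follows the paper's argument essentially step for step. Capital comes from Proposition~\ref{prop:partial-investment}, labor from the log-differentiated labor condition combined with the wage response (your equation is the paper's divided by \(\varepsilon_L\)), output from the share decomposition, and surplus from substituting the household first-order conditions into \(\dd S\), with the strictness of \(\dd L>0\) traced, as in the paper, to the strict fall in \(\tau_L\).
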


The restriction on the continuation elasticity is substantive. It cannot be replaced by the statement that capital and labor are taxed differently, or by a general appeal to the value of reputation. Section \ref{sec:terminal} gives an explicit continuation economy in which it can be checked from primitives.

\begin{remark}
\label{rem:low-exposure}
For fixed \(q\),
\begin{equation}
 \frac{\partial\zeta(q;\xi)}{\partial\xi}
 =
 \frac{q}{[1-\xi(1-q)]^2}>0.
 \label{eq:zeta-xi}
\end{equation}
Moreover, \(0\leq\zeta(q;\xi)\leq\xi\) for every \(q\in[p,1]\).  Hence
\(\zeta(q;\xi)\to0\) uniformly in \(q\) as \(\xi\downarrow0\).  For a given continuation elasticity, lower exposure therefore makes investment protection harder to obtain from a marginal increase in compliance. This is a comparison at fixed \(q\). Once the equilibrium response of \(q\) is included, neither \(q\) nor \(\zeta\) has been shown to be monotone in exposure.

There is also a limit to how far the mixing calculation can be continued as exposure falls. If
\(xI_K\leq\overline I_K\) on a compact policy branch, then interior mixing requires
\begin{equation}
 \xi\overline I_K
 \geq\xi xI_K
 =\beta_O\Delta(p/q)
 \geq\beta_O\Delta(p).
 \label{eq:mixing-exposure-bound}
\end{equation}
For \(\xi<\beta_O\Delta(p)/\overline I_K\), the current prize is too small to make the opportunist indifferent, so an interior mixing equilibrium is impossible. At zero exposure, betrayal yields no current gain, whereas compliance preserves a strictly positive continuation gain. The opportunist then strictly prefers to comply. Hence the mixing branch must end before exposure reaches zero; it may end at full compliance or at an earlier fold or boundary.
\end{remark}

\section{Ramsey allocations under limited commitment}
\label{sec:allocation}

The preceding section asks what happens to compliance and investment when the tax mix changes within the mixing regime. A different question is how much capital can be supported if the authority insists on full compliance. We now answer this question on a selected branch of the balanced-budget allocation set.

It is useful here to work directly with capital and labor, rather than with their tax rates. Under full compliance, the household's first-order conditions identify the after-tax factor payments. Those payments also determine the amount the executive can seize. The incentive constraint can therefore be written as a restriction on the allocation itself. This is what makes a global comparison possible, although only within the full-compliance regime.

With \(q=1\), household optimality requires
\begin{equation}
 \Psi_K'(K)=(1-\tau_K)F_K(K,L),
 \qquad
 \Psi_L'(L)=(1-\tau_L)F_L(K,L).
 \label{eq:fc-household-focs}
\end{equation}
Multiplying these conditions by the corresponding factor inputs and using constant returns gives the balanced-budget implementability condition
\begin{equation}
 \mathcal H(K,L)
 \equiv
 F(K,L)-K\Psi_K'(K)-L\Psi_L'(L)=G.
 \label{eq:fc-implementability}
\end{equation}

\begin{assumption}
\label{ass:fc-branch}
The selected interior component of \eqref{eq:fc-implementability} is a compact graph
\begin{equation}
 \mathcal A_G^1
 =
 \{(K,\ell_G(K)):K\in[\underline K_G,\overline K_G]\},
 \label{eq:fc-graph}
\end{equation}
where \(\ell_G\) is continuous.  The implied net-of-tax rates lie in \((0,1]\).
Full-compliance surplus
\begin{equation}
 s_G(K)=
 F(K,\ell_G(K))-\Psi_K(K)-\Psi_L(\ell_G(K))
 \label{eq:reduced-surplus}
\end{equation}
is strictly increasing below a unique \(K_G^R\) and strictly decreasing above it.
\end{assumption}

Assumption \ref{ass:fc-branch} isolates the usual single-peaked Ramsey problem on a specified allocation branch. It is not an existence or uniqueness result for the entire balanced-budget set. In particular, it does not rule out other components of that set.

Along the selected branch, neither type betrays. There is therefore no diversion, and observing compliance leaves the posterior equal to \(p\), whatever tax mix is used. The continuation term is constant. The authority's ranking of full-compliance allocations is consequently its ranking of current surplus, \(s_G\). Before imposing the incentive constraint, the preferred allocation has capital \(K_G^R\).

\begin{lemma}
\label{lem:allocation-tax-equivalence}
Every interior full-compliance tax equilibrium on the selected component induces an
allocation in \(\mathcal A_G^1\).  Conversely, conditional on \(q=1\), every
allocation in \(\mathcal A_G^1\) is implemented as a competitive and fiscally balanced
allocation by the unique code
\begin{equation}
 \tau_K(K)
 =
 1-\frac{\Psi_K'(K)}{F_K(K,\ell_G(K))},
 \qquad
 \tau_L(K)
 =
 1-\frac{\Psi_L'(\ell_G(K))}
 {F_L(K,\ell_G(K))}.
 \label{eq:implementing-taxes}
\end{equation}
It is supported by a full-compliance tax-code equilibrium if and only if it also satisfies \eqref{eq:allocation-ic}.
\end{lemma}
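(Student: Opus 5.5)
The plan is to move between tax codes and allocations using the two household first-order conditions in \eqref{eq:fc-household-focs} and Euler's theorem for the homogeneous production function. Under full compliance the incentive constraint then depends only on the allocation. I read \eqref{eq:allocation-ic} as the boundary inequality from Section \ref{sec:environment} at \(q=1\), rewritten in terms of \(K\). Then \(m(1)=1\), and compliance leads to posterior \(p/1=p\). So full compliance is a best response for \(O\) exactly when \(\xi xrK\leq\beta_O\Delta(p)\). By the capital first-order condition, \(xrK=K\Psi_K'(K)\). The constraint therefore becomes
\begin{equation*}
 \xi K\Psi_K'(K)\leq\beta_O\Delta(p),
\end{equation*}
which involves only \(K\). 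I will prove the lemma in three steps.

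\emph{Forward direction.} Take an interior full-compliance tax-code equilibrium. Since \(q=1\), household optimality reduces to \eqref{eq:fc-household-focs}. Multiplying the two conditions by \(K\) and \(L\) gives \(xrK=K\Psi_K'(K)\) and \(ywL=L\Psi_L'(L)\). Constant returns give \(rK+wL=F(K,L)\). Hence revenue is
\begin{equation*}
 R=(1-x)rK+(1-y)wL=F-K\Psi_K'-L\Psi_L'=\mathcal H(K,L).
\end{equation*}
Exact balance \(R=G\) therefore yields \eqref{eq:fc-implementability}. The equilibrium lies on the selected component by hypothesis. Assumption \ref{ass:fc-branch} says that this component is the graph of \(\ell_G\), so \(L=\ell_G(K)\) and the allocation is in \(\mathcal A_G^1\).

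\emph{Converse and uniqueness.} Fix \(K\in[\underline K_G,\overline K_G]\) and set \(L=\ell_G(K)\). Factor prices \(r=F_K\) and \(w=F_L\) are positive. Any code that supports this allocation as a competitive equilibrium with \(q=1\) must satisfy \eqref{eq:fc-household-focs}. This pins down \(x=\Psi_K'/F_K\) and \(y=\Psi_L'/F_L\) uniquely, which is exactly \eqref{eq:implementing-taxes}. Assumption \ref{ass:fc-branch} places these net rates in \((0,1]\), so \((\tau_K,\tau_L)\in[0,1)^2\) is an admissible code.

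It remains to check that this code works. Each household faces the strictly concave, separable problem \(m\,xrK+ywL-\Psi_K(K)-\Psi_L(L)\), with \(m=1\) and lump-sum terms taken as given. Its first-order conditions are therefore sufficient, and the allocation is optimal. Running the revenue identity from the forward direction in reverse, \(\mathcal H(K,L)=G\) gives \(R=G\), so the budget balances.

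\emph{Equilibrium support.} Conditional on \(q=1\), the committed type complies. Bayes' rule gives posterior \(p\) after compliance; seizure would reveal \(O\) and lead to posterior \(0\), as in the general model. The opportunist's compliance is then a best response if and only if the displayed inequality holds. So the allocation is supported by a full-compliance tax-code equilibrium exactly when \eqref{eq:allocation-ic} is satisfied.

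The step that needs the most care is this last equivalence. The seizure prize must be rewritten as \(\xi K\Psi_K'(K)\) using the \(q=1\) first-order condition rather than the general \(m(q)\) version, and the continuation gain must be evaluated at \(\Delta(p)\) precisely because compliance by both types leaves the posterior unchanged. The other steps are bookkeeping with Euler's theorem.
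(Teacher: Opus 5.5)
Your proposal is correct and follows essentially the same route as the paper. Euler's theorem combined with the \(q=1\) first-order conditions turns revenue into \(\mathcal H(K,L)\) in both directions, the first-order conditions pin down the net-of-tax rates uniquely and are sufficient by convexity of the supply costs, and the incentive constraint reduces to \eqref{eq:allocation-ic} via \(xrK=K\Psi_K'(K)\) and the unchanged posterior \(p\); your explicit derivation of that last step is simply a more detailed version of the paper's closing sentence.
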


The remaining constraint concerns the executive. Let
\begin{equation}
 \phi_K(K)=K\Psi_K'(K).
 \label{eq:capital-payment-map}
\end{equation}
This is the household's total after-tax capital payment when compliance is certain. It is strictly increasing:
\begin{equation}
 \phi_K'(K)=\Psi_K'(K)+K\Psi_K''(K)>0.
 \label{eq:phi-monotone}
\end{equation}
A fraction \(\xi\) of this payment can be seized. The opportunist therefore complies if and only if
\begin{equation}
 \xi\phi_K(K)\leq\beta_O\Delta(p).
 \label{eq:allocation-ic}
\end{equation}
The right-hand side is the discounted gain from preserving the prior reputation. Thus, limited commitment places an upper bound on the capital payment, and hence on capital itself. To express this bound on the selected allocation branch, write
\begin{equation}
 b(p;\beta_O,\xi)=\frac{\beta_O\Delta(p)}{\xi},
 \qquad
 \mathcal K_G^D(p)
 =
 \left\{
 K\in[\underline K_G,\overline K_G]:
 \phi_K(K)\leq b(p;\beta_O,\xi)
 \right\}.
 \label{eq:credibility-capacity}
\end{equation}
If this set is nonempty, let its largest element be
\begin{equation}
 K_G^D(p)=\max\mathcal K_G^D(p).
 \label{eq:capped-credibility-ceiling}
\end{equation}
If \(b\) lies between \(\phi_K(\underline K_G)\) and
\(\phi_K(\overline K_G)\), then
\(K_G^D=\phi_K^{-1}(b)\).  If
\(b\geq\phi_K(\overline K_G)\), then \(K_G^D=\overline K_G\).

\begin{proposition}
\label{prop:allocation-ceiling}
Maintain Assumption \ref{ass:fc-branch}.  Then:
\begin{enumerate}[label=(\roman*)]
 \item a full-compliance allocation on the selected component exists if and only if
 \[
  b(p;\beta_O,\xi)\geq\phi_K(\underline K_G);
 \]
 \item conditional on existence, the unique constrained Ramsey allocation is
 \begin{equation}
  K_G^C(p)=\min\{K_G^R,K_G^D(p)\},
  \qquad
  L_G^C(p)=\ell_G(K_G^C(p));
  \label{eq:allocation-ceiling-rule}
 \end{equation}
 \item the unconstrained Ramsey allocation is credible if and only if
 \[
  b(p;\beta_O,\xi)\geq\phi_K(K_G^R).
 \]
 Otherwise the credibility constraint binds and
 \(K_G^C(p)=K_G^D(p)<K_G^R\).
\end{enumerate}
\end{proposition}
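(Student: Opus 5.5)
The plan is to reduce the proposition to a one-dimensional maximization over the selected graph \(\mathcal A_G^1\), using Lemma \ref{lem:allocation-tax-equivalence}.

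\emph{Reduction.} By that lemma, full-compliance tax equilibria on the selected component correspond one-to-one with allocations \((K,\ell_G(K))\), \(K\in[\underline K_G,\overline K_G]\), that satisfy \eqref{eq:allocation-ic}. Since \(\phi_K\) is strictly increasing by \eqref{eq:phi-monotone}, \eqref{eq:allocation-ic} is equivalent to \(\phi_K(K)\le b(p;\beta_O,\xi)\). The credible set is therefore exactly \(\mathcal K_G^D(p)\). By monotonicity and continuity of \(\phi_K\), this set is a closed lower interval \([\underline K_G,K_G^D(p)]\) when it is nonempty.

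\emph{Part (i).} The set \(\mathcal K_G^D(p)\) is nonempty if and only if its smallest candidate \(\underline K_G\) qualifies, that is, if and only if \(b\ge\phi_K(\underline K_G)\).

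\emph{Part (ii).} As argued in the text, along the full-compliance branch \(D=0\) and the posterior equals \(p\). Hence \(\omega\cJ\) is constant, and the authority's objective reduces to \(s_G(K)\). We then maximize the single-peaked \(s_G\) over the interval \([\underline K_G,K_G^D]\).

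If \(K_G^R\le K_G^D\), the peak is feasible. By Assumption \ref{ass:fc-branch}, \(s_G\) is strictly increasing below \(K_G^R\) and strictly decreasing above it, so the peak is the unique maximizer.

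If \(K_G^R>K_G^D\), the whole interval lies in the increasing region. The right endpoint \(K_G^D\) is then the unique maximizer, since \(s_G(K)<s_G(K_G^D)\) for every \(K<K_G^D\).

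Combining the two cases gives \(K_G^C=\min\{K_G^R,K_G^D\}\), with labor \(L_G^C=\ell_G(K_G^C)\). The implementing taxes are then given by \eqref{eq:implementing-taxes}.

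\emph{Part (iii).} The Ramsey point is credible exactly when \(K_G^R\in\mathcal K_G^D\), which holds if and only if \(\phi_K(K_G^R)\le b\). Otherwise \(\phi_K(K_G^R)>b\), and monotonicity of \(\phi_K\) gives \(K_G^D<K_G^R\). In this case the minimum equals \(K_G^D\), and the constraint binds. Binding here means that \(\phi_K(K_G^D)=b\) whenever \(b\) exceeds \(\phi_K(\underline K_G)\), because an interior cap is then given by \(\phi_K^{-1}(b)\).

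\emph{Main obstacle.} The delicate step is making sure that "unique" refers to the allocation and not only to \(K\). Here we rely on two facts: \(\ell_G\) is a function, and the code is unique by Lemma \ref{lem:allocation-tax-equivalence}.

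We also need to confirm that \(K_G^R\) indeed lies in \([\underline K_G,\overline K_G]\), which is implicit in Assumption \ref{ass:fc-branch}. If the peak sits at an endpoint, the monotone-region argument still applies without change.
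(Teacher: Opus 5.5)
Your proposal is correct and follows the paper's proof essentially step for step: rewrite the seizure prize as \(\xi\phi_K(K)\), use continuity and strict monotonicity of \(\phi_K\) to obtain the credible interval \([\underline K_G,K_G^D]\), which is nonempty iff \(b\geq\phi_K(\underline K_G)\), and then maximize the single-peaked \(s_G\) over it in the two cases \(K_G^R\leq K_G^D\) and \(K_G^R>K_G^D\). One small correction: the equivalence of \eqref{eq:allocation-ic} with \(\phi_K(K)\leq b\) needs only \(\xi>0\), and monotonicity of \(\phi_K\) is what makes the credible set a lower interval.
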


To see the result, suppose first that the ordinary Ramsey allocation gives the executive no reason to betray. There is then no reason to change it. If instead its capital payment is too large, capital must be reduced to the credibility ceiling. Reducing it further would lower surplus without being required for compliance, since surplus is increasing up to the ordinary Ramsey allocation.

This conclusion is conditional on choosing full compliance. A policy with mixing may still be preferable, and Proposition \ref{prop:allocation-ceiling} does not rule it out. We also use the convention that the opportunist complies when indifferent.\footnote{At equality, compliance is a best response but not a strict one. The allocation is therefore weakly implementable. Indifference at the full-compliance allocation does not, on its own, establish another rational-expectations equilibrium with mixing: a different compliance probability changes household choices and the current prize.}

Better reputation or a more patient opportunist raises the ceiling, while greater exposure lowers it. These comparisons are immediate once the constraint is written in allocation space, but it matters which continuation objects are held fixed.

\begin{corollary}
\label{cor:allocation-comparative-statics}
Suppose \(\Delta\) is strictly increasing.  On the interior binding region
\[
 \underline K_G<K_G^C=K_G^D<K_G^R,
\]
holding the continuation schedule fixed,
\begin{align}
 \frac{\partial K_G^C}{\partial p}
 &=
 \frac{\beta_O\Delta'(p)}
 {\xi\phi_K'(K_G^C)}>0,
 &
 \frac{\partial K_G^C}{\partial\beta_O}
 &=
 \frac{\Delta(p)}
 {\xi\phi_K'(K_G^C)}>0,
 \label{eq:ceiling-positive}\\
 \frac{\partial K_G^C}{\partial\xi}
 &=
 -\frac{\beta_O\Delta(p)}
 {\xi^2\phi_K'(K_G^C)}<0.
 \label{eq:ceiling-exposure}
\end{align}
Under strict slack, \(b>\phi_K(K_G^R)\), these derivatives are locally zero.  At the
feasibility boundary and at \(b=\phi_K(K_G^R)\), the corresponding statements are
one-sided.  Conditional on full-compliance feasibility, constrained surplus is weakly
increasing in \(p\) and \(\beta_O\), and weakly decreasing in \(\xi\), with strict
changes whenever the ceiling lies strictly between \(\underline K_G\) and \(K_G^R\).
\end{corollary}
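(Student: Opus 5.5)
The plan is to differentiate the binding credibility constraint by the implicit function theorem and then read off the surplus comparisons from the single-peakedness in Assumption \ref{ass:fc-branch} together with Proposition \ref{prop:allocation-ceiling}.

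\textbf{Step 1: the binding region.} On the interior binding region \(\underline K_G<K_G^C=K_G^D<K_G^R\), Proposition \ref{prop:allocation-ceiling}(iii) gives \(K_G^C=K_G^D\). Since \(K_G^D<K_G^R\le\overline K_G\), the cap at \(\overline K_G\) is inactive. Since also \(K_G^D>\underline K_G\), the value \(b\) lies strictly between \(\phi_K(\underline K_G)\) and \(\phi_K(\overline K_G)\). The ceiling is therefore characterized by the equation
\[
\xi\phi_K(K)=\beta_O\Delta(p).
\]
By \eqref{eq:phi-monotone}, \(\phi_K'>0\), and \(\phi_K\) is \(C^1\) because \(\Psi_K\) is \(C^2\). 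The implicit function theorem then gives a \(C^1\) solution in \((p,\beta_O,\xi)\). Differentiating the equation yields the three formulas in \eqref{eq:ceiling-positive}--\eqref{eq:ceiling-exposure}. The signs follow from \(\Delta(p)>0\), \(\Delta'(p)>0\), \(\xi>0\) and \(\beta_O>0\). The strict inequalities defining the region are open conditions, so they persist under small perturbations. Hence the binding characterization holds on a neighborhood, which is what makes these genuine partial derivatives.

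\textbf{Step 2: slack and boundaries.} Under strict slack, \(b>\phi_K(K_G^R)\). By continuity of \(b\) in its arguments, this inequality persists locally. Proposition \ref{prop:allocation-ceiling}(iii) then gives \(K_G^C=K_G^R\) locally, and \(K_G^R\) does not depend on \((p,\beta_O,\xi)\), so the derivatives vanish. At \(b=\phi_K(K_G^R)\), and at the feasibility boundary \(b=\phi_K(\underline K_G)\), the map \(K_G^C=\min\{K_G^R,\phi_K^{-1}(b)\}\) has a kink or ends. Its one-sided derivatives equal the binding-side formula on one side and zero or undefined on the other.

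\textbf{Step 3: surplus.} Write constrained surplus as \(s_G(K_G^C)\). The ceiling \(K_G^C=\min\{K_G^R,K_G^D\}\) is weakly monotone in \(b\) globally. Indeed \(K_G^D\) is the largest element of a sublevel set of the increasing function \(\phi_K\), so it is weakly increasing in \(b\). In turn, \(b=\beta_O\Delta(p)/\xi\) is increasing in \(p\) and \(\beta_O\) and decreasing in \(\xi\). Moreover \(K_G^C\le K_G^R\), and \(s_G\) is strictly increasing on \([\underline K_G,K_G^R]\). Composing these, constrained surplus is weakly increasing in \(p\) and \(\beta_O\) and weakly decreasing in \(\xi\). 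When the ceiling lies strictly between \(\underline K_G\) and \(K_G^R\), Step 1 shows that \(K_G^C\) moves strictly. Strict monotonicity of \(s_G\) then converts this into a strict change in surplus.

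\textbf{Main obstacle.} The delicate point is keeping the global weak statement separate from the local derivatives. This requires checking that \(K_G^D=\max\mathcal K_G^D\) is monotone in \(b\) even when \(\phi_K^{-1}(b)\) falls outside \([\underline K_G,\overline K_G]\), a case I would handle through the case description stated after \eqref{eq:capped-credibility-ceiling}. It also requires noting that \(\Delta\) is held fixed throughout, so varying \(\beta_O\) does not alter the continuation schedule.
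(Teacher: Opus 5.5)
Your proposal is correct and follows essentially the same route as the paper. Both arguments implicitly differentiate the binding constraint $\xi\phi_K(K)=\beta_O\Delta(p)$, note that under strict slack $K_G^R$ stays feasible, and get the surplus claims from the ordering implied by single-peaked $s_G$ rather than from a derivative of surplus; your sublevel-set argument that $K_G^D$ is weakly increasing in $b$ simply makes precise the paper's remark that relaxing the ceiling moves capital toward $K_G^R$.
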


The comparative statics concern the same continuation schedule. If changing the opportunist's discount factor also changes \(\Delta\), the numerator of the derivative with respect to \(\beta_O\) becomes \(\Delta+\beta_O\Delta_{\beta_O}\). The sign then requires an additional restriction. Section \ref{sec:dynamics} returns to this distinction when the authority and the executive become more patient together.

One might expect the capital ceiling to translate into a lower bound on the capital-tax rate. Such a translation is not generally available. A change in the allocation also changes the marginal product of capital, and the implementing tax rate depends on both.

\begin{remark}
\label{rem:no-tax-floor}
At every point where the allocation branch is differentiable, the implementing rate
obeys
\begin{equation}
 \frac{\dd\tau_K(K)}{\dd K}
 =
 -\frac{\Psi_K'(K)}{F_K}
 \left[
 \frac{\Psi_K''(K)}{\Psi_K'(K)}
 -
 \frac{F_{KK}+F_{KL}\ell_G'(K)}{F_K}
 \right].
 \label{eq:tax-implementation-slope}
\end{equation}
The maintained assumptions do not sign the bracket along an arbitrary selected branch. A tighter binding constraint reduces capital, but the induced change in the statutory capital-tax rate also depends on how labor and the marginal product of capital adjust. The allocation ceiling should therefore not be described as a general tax-rate floor.
\end{remark}

\section{A primitive continuation benchmark}
\label{sec:terminal}

So far, we have treated the opportunist's continuation gain as a function of reputation and stated the investment result in terms of its elasticity. This leaves an important question: can the required elasticity arise in an ordinary production economy? The following two-period example gives an affirmative, but qualified, answer. It also illustrates why the curvature of the authority's continuation value must be examined separately.

We use full exposure in the terminal period. This choice makes the continuation value particularly simple: reputation enters the expected return on capital multiplicatively. With partial terminal exposure, expected retention is instead \(1-\xi(1-z)\), and the power formula derived below no longer applies.

The tax-composition problem of the preceding sections takes place in period one. In period two, households invest in a new Cobb--Douglas production opportunity,
\begin{equation}
 Y_2=\mathcal A_2K_2^\alpha L_2^{1-\alpha},
 \qquad \alpha\in(0,1),
 \label{eq:terminal-production}
\end{equation}
under a tax package \((\bar\tau_K,\bar\tau_L)\) fixed before the period-one signal. Write \(\bar x=1-\bar\tau_K\) and \(\bar y=1-\bar\tau_L\). Terminal factor-supply costs are isoelastic, with elasticities \(\bar\varepsilon_K,\bar\varepsilon_L>0\). There is no future reputation to preserve after period two, so an opportunist seizes the entire after-tax capital return. A capital investor therefore receives that return only when the executive is committed, an event with probability \(z\).

The fixed terminal code is a continuation experiment, not a second solution of the exact-budget Ramsey problem. Its receipts vary with the inherited reputation; we do not require it to finance the same fixed \(G\) at every posterior.\footnote{Ordinary terminal tax receipts can be returned through an account that does not affect atomistic marginal choices. Fixing the terminal code allows us to isolate the effect of reputation on investment. Requiring a new exact expenditure balance and allowing taxes to be reset would introduce a further policy problem, whose solution need not have the same elasticity.}

The terminal household conditions can be solved jointly with production. Output takes the form
\begin{equation}
 Y_2(z)=\bar Yz^\gamma,
 \qquad
 \gamma=
 \frac{\alpha\bar\varepsilon_K(1+\bar\varepsilon_L)}
 {1+(1-\alpha)\bar\varepsilon_K+\alpha\bar\varepsilon_L}.
 \label{eq:terminal-gamma}
\end{equation}

Thus, \(\gamma\) is the elasticity of terminal output with respect to reputation, including the adjustment of both factors. The opportunist's terminal payoff is the residual capital income it seizes,
\begin{equation}
 V_O(z)=B_2z^\gamma,
 \qquad
 B_2=\bar x\alpha\bar Y>0.
 \label{eq:terminal-opportunist}
\end{equation}
Because output vanishes at zero reputation, \(V_O(0)=0\). It follows that
\begin{equation}
 \Delta(z)=B_2z^\gamma,\qquad \eta(z)=\gamma.
 \label{eq:terminal-delta}
\end{equation}

\begin{proposition}
\label{prop:primitive-eta}
In the terminal benchmark,
\begin{equation}
 \eta\leq1
 \quad\Longleftrightarrow\quad
 \alpha\bar\varepsilon_K(1+\bar\varepsilon_L)
 \leq
 1+(1-\alpha)\bar\varepsilon_K+\alpha\bar\varepsilon_L.
 \label{eq:gamma-condition}
\end{equation}
A sufficient condition is
\begin{equation}
 \alpha\leq\frac12,\qquad \bar\varepsilon_K\leq1.
 \label{eq:simple-gamma-condition}
\end{equation}
No restriction on \(\bar\varepsilon_L\) is then required.
\end{proposition}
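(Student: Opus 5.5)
The plan is to reduce the claim to an inequality between the two sides of the formula for \(\gamma\) in \eqref{eq:terminal-gamma}, and then compare those two sides term by term.

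First I would confirm the ingredients taken from the benchmark. In logs, the terminal household conditions are \(\Psi_K'(K_2)=z\bar x\alpha Y_2/K_2\) and \(\Psi_L'(L_2)=\bar y(1-\alpha)Y_2/L_2\). Isoelastic costs give \(\Psi_i'(i)\propto i^{1/\bar\varepsilon_i}\). So these conditions form a linear system in \(\log K_2\) and \(\log L_2\) with \(\log z\) as the only shifter. Solving the system by Cramer's rule should reproduce \(\gamma\) in \eqref{eq:terminal-gamma}. The determinant is proportional to \(1+(1-\alpha)\bar\varepsilon_K+\alpha\bar\varepsilon_L>0\), so the denominator of \(\gamma\) is positive.

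Since \(V_O(0)=0\), the benchmark gives \(\Delta(z)=B_2z^\gamma\) with \(B_2>0\). Hence \(\eta(z)=z\Delta'(z)/\Delta(z)=\gamma\) at every positive posterior, as recorded in \eqref{eq:terminal-delta}.

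The equivalence then follows directly. We have \(\eta\le1\) if and only if \(\gamma\le1\). Multiplying through by the positive denominator, this holds if and only if \(\alpha\bar\varepsilon_K(1+\bar\varepsilon_L)\le 1+(1-\alpha)\bar\varepsilon_K+\alpha\bar\varepsilon_L\), which is \eqref{eq:gamma-condition}.

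For the sufficient condition, I would expand the left-hand side as \(\alpha\bar\varepsilon_K+\alpha\bar\varepsilon_K\bar\varepsilon_L\) and bound each piece by a separate term on the right-hand side:
\begin{itemize}
\item Since \(\alpha\le\tfrac12\), we have \(\alpha\le1-\alpha\), so \(\alpha\bar\varepsilon_K\le(1-\alpha)\bar\varepsilon_K\).
\item Since \(\bar\varepsilon_K\le1\), we have \(\alpha\bar\varepsilon_K\bar\varepsilon_L\le\alpha\bar\varepsilon_L\).
\item The remaining right-hand term, \(1\), is strictly positive.
\end{itemize}
Adding the two inequalities and the positive constant yields \eqref{eq:gamma-condition}, and in fact strictly, so \(\eta<1\). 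Neither comparison restricts \(\bar\varepsilon_L\), because \(\bar\varepsilon_L\) appears only in the second comparison, multiplied by \(\alpha\) on both sides. This gives the last sentence of the proposition.

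The only step with real content is confirming \eqref{eq:terminal-gamma}, in particular that the determinant is positive, since the direction of the cross-multiplication depends on it. I would check this by writing the linear system explicitly. The coefficient matrix is \(\begin{pmatrix}1/\bar\varepsilon_K+1-\alpha & -(1-\alpha)\\ -\alpha & 1/\bar\varepsilon_L+\alpha\end{pmatrix}\), and multiplying its determinant by \(\bar\varepsilon_K\bar\varepsilon_L\) gives the stated denominator. All remaining steps are elementary inequalities.
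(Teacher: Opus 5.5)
Your proposal is correct and matches the paper's proof. The equivalence follows from \(\eta=\gamma\) and the positive denominator in \eqref{eq:terminal-gamma}, and your two term-by-term bounds amount to the paper's observation that the difference \(1+(1-2\alpha)\bar\varepsilon_K+\alpha\bar\varepsilon_L(1-\bar\varepsilon_K)\) has nonnegative nonconstant terms; your explicit determinant check and the remark that the inequality is then strict are sound additions.
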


The condition is not automatic. It restricts how strongly future investment and output respond to reputation. When it holds, the full-exposure version of the investment theorem applies; with partial exposure in period one, the relevant comparison is still the stricter \(\zeta\geq\gamma\).

We can also derive the authority's continuation value in this example. Treat terminal seizure as a transfer, so that terminal welfare is output less the two supply costs. Resource surplus is then
\begin{equation}
 S_2(z)
 =
 \bar Y\left[(1-\ell)z^\gamma-kz^{\gamma+1}\right],
 \label{eq:terminal-surplus}
\end{equation}
where
\begin{equation}
 \ell=
 \frac{\bar\varepsilon_L}{1+\bar\varepsilon_L}
 \bar y(1-\alpha),
 \qquad
 k=
 \frac{\bar\varepsilon_K}{1+\bar\varepsilon_K}
 \bar x\alpha.
 \label{eq:terminal-cost-shares}
\end{equation}

\begin{proposition}
\label{prop:terminal-welfare}
If \(0<\gamma\leq1\), then \(S_2(0)=0\), \(S_2\) is strictly concave on
\((0,1]\), and
\begin{equation}
 \frac{\dd}{\dd q}\left[qS_2(p/q)\right]
 =
 S_2(z)-zS_2'(z)>0.
 \label{eq:terminal-smoothing}
\end{equation}
Consequently, a reform satisfying the conditions of Corollary \ref{cor:current-surplus}, with \(\eta=\gamma\) and \(0<\gamma\leq1\), raises both current resource surplus and expected terminal resource surplus in this fixed-terminal-policy benchmark.
\end{proposition}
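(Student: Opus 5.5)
The plan is to verify the three analytic claims directly from the closed form \eqref{eq:terminal-surplus} and then chain them with Theorem \ref{thm:partial-vulnerability} and Corollary \ref{cor:current-surplus}. First I will record the signs of the constants. Since \(\bar\varepsilon_L/(1+\bar\varepsilon_L)<1\), \(\bar y\leq1\) and \(1-\alpha<1\), we have \(0\leq\ell<1\), so \(1-\ell>0\). Since \(\bar\tau_K\in[0,1)\), \(\bar x>0\), and hence \(k>0\). Also \(\bar Y>0\).

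Next come the two properties of \(S_2\) itself. Because \(\gamma>0\), both terms \(z^\gamma\) and \(z^{\gamma+1}\) vanish at zero, which gives \(S_2(0)=0\). For concavity, I will differentiate twice on \((0,1]\):
\[
 S_2''(z)=\bar Y\left[(1-\ell)\gamma(\gamma-1)z^{\gamma-2}-k\gamma(\gamma+1)z^{\gamma-1}\right].
\]
When \(\gamma\leq1\), the first term is nonpositive. The second is strictly negative because \(k,\gamma>0\). Hence \(S_2''<0\), and \(S_2\) is strictly concave on \((0,1]\).

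For the smoothing identity, I will set \(z=p/q\in(0,1]\) and use \(\dd z/\dd q=-z/q\). The chain rule then gives
\[
 \frac{\dd}{\dd q}\left[qS_2(p/q)\right]=S_2(z)+qS_2'(z)(-z/q)=S_2(z)-zS_2'(z).
\]
Substituting the power form,
\[
 S_2(z)-zS_2'(z)=\bar Y\left[(1-\ell)(1-\gamma)z^\gamma+k\gamma z^{\gamma+1}\right].
\]
This is strictly positive for \(z>0\): the first term is nonnegative because \(\gamma\leq1\), and the second is strictly positive. Equivalently, strict concavity together with \(S_2(0)=0\) puts the chord above the tangent, but the explicit formula is cleaner.

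The consequence then follows by assembling the pieces. With \(\cV=S_2\) and \(S_2(0)=0\), the expected terminal surplus \eqref{eq:expected-continuation} reduces to \(\cJ(q;p)=qS_2(p/q)\). By the identity just shown, this is strictly increasing in \(q\). A reform satisfying the hypotheses of Corollary \ref{cor:current-surplus} raises \(q\), so expected terminal surplus rises. The same corollary, applicable because \(\zeta\geq\eta=\gamma\) by \eqref{eq:terminal-delta}, gives \(\dd S>0\) for current surplus.

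The only delicate point is bookkeeping rather than analysis. The terminal code is fixed before the period-one signal, so \(\bar Y,\ell,k\) do not depend on \(q\) and enter only through \(z\). I also need to keep \(z\) on \((0,1]\), where the formulas are valid, which holds because \(q\geq p>0\).
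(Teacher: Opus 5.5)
Your proposal is correct and follows essentially the same route as the paper: sign the constants, differentiate twice to get strict concavity, apply the chain rule to $qS_2(p/q)$, and invoke Corollary~\ref{cor:current-surplus}. The differences are minor. You obtain $S_2(z)-zS_2'(z)>0$ from the explicit expression $\bar Y[(1-\ell)(1-\gamma)z^\gamma+k\gamma z^{\gamma+1}]$ rather than from the tangent inequality at $0$. You also take the closed form \eqref{eq:terminal-surplus} as given, whereas the paper's proof first derives it by integrating the isoelastic supply costs.
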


The last result has a simple interpretation. Greater compliance makes the terminal reputation less dispersed without changing its mean. Since terminal surplus is concave in reputation here, this reduction in dispersion is beneficial. The example therefore need not generate a conflict between current investment and future welfare: both can improve under the same reform.

This observation also indicates why an informational cost of compliance cannot be assumed throughout the paper. Such a cost arises when continuation value is convex, whereas the continuation value in this example is concave. The next section treats the two possibilities separately. Nor does the example establish a monotone policy response to a common discount factor: changing that factor changes current compliance as well as the weight on the future.

\section{Reputation and the dynamic composition margin}
\label{sec:dynamics}

A tax code also affects how much is learned about the executive. Suppose that a reform makes compliance more likely. This reduces the probability of betrayal, which reveals the opportunistic type. It also makes compliance less reassuring: an executive that faced a smaller temptation has passed a less demanding test.

These two changes leave mean reputation unchanged. The posterior is zero after betrayal and \(p/q\) after compliance, with probabilities \(1-q\) and \(q\), respectively. Its mean is therefore always \(p\). Raising \(q\) moves the posterior distribution toward that mean. Whether the authority values this change depends on the curvature of its continuation payoff, not on Bayes' rule alone.

\begin{proposition}
\label{prop:information}
Along an interior mixing branch, holding \(p\), \(\beta_O\), and the continuation schedules fixed,
\begin{equation}
 \left.\frac{\dd D}{\dd q}\right|_{\mathrm{IC}}
 =
 -X\left[1+\eta\frac{1-q}{q}\right]<0,
 \label{eq:diversion-derivative}
\end{equation}
and
\begin{equation}
 \cJ_q(q;p)
 =
 \cV(z)-z\cV'(z)-\cV(0),
 \qquad z=p/q.
 \label{eq:Jq}
\end{equation}
Thus \(\cJ_q\leq0\) if \(\cV\) is convex and
\(\cJ_q\geq0\) if \(\cV\) is concave, with strict inequalities under strict
curvature.
\end{proposition}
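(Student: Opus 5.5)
The plan is to prove the two displays in Proposition \ref{prop:information} separately, each by a short direct computation.

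\emph{Diversion derivative.} Write \(D=(1-q)X\). Along the interior branch the indifference condition \eqref{eq:mixing-ic} holds identically, so \(X=\beta_O\Delta(p/q)\), with \(p\), \(\beta_O\) and \(\Delta\) fixed. This makes \(X\) a function of \(q\) alone. Differentiating \(\log X=\log\beta_O+\log\Delta(p/q)\) gives \(\dd\log X/\dd\log q=-\eta(p/q)\), which is exactly the relation \(\wh X=-\eta\wh q\) already used before \eqref{eq:compliance-response}. Hence \(\dd X/\dd q=-\eta X/q\). The product rule then gives
\[
\left.\frac{\dd D}{\dd q}\right|_{\mathrm{IC}}=-X+(1-q)\left(-\eta\frac{X}{q}\right)=-X\left[1+\eta\frac{1-q}{q}\right].
\]
To sign this, note that \(X>0\) at an interior point. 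Also \(\eta>0\), because \(\Delta>0\) and \(\Delta'>0\) are maintained at positive posteriors. Finally \(q\in(p,1)\), so \((1-q)/q>0\). The bracket therefore exceeds one and the derivative is strictly negative.

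\emph{Continuation derivative.} Differentiate \eqref{eq:expected-continuation} directly. With \(z=p/q\) we have \(\dd z/\dd q=-p/q^2=-z/q\), so
\[
\cJ_q=\cV(z)+q\cV'(z)\left(-\frac{z}{q}\right)-\cV(0)=\cV(z)-z\cV'(z)-\cV(0).
\]
The sign claims come from a tangent-line argument. The expression equals \(-[\cV(0)-\cV(z)-\cV'(z)(0-z)]\), that is, minus the gap at the point \(0\) between \(\cV\) and its tangent line at \(z\).
\begin{itemize}
\item If \(\cV\) is convex, \(\cV\) lies weakly above its tangent, so the gap is nonnegative and \(\cJ_q\leq0\).
\item If \(\cV\) is concave, \(\cV\) lies weakly below its tangent, so \(\cJ_q\geq0\).
\item Under strict curvature the gap is strictly positive because \(z>0\), which gives the strict inequalities.
\end{itemize}

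The only care needed is regularity. This argument requires \(\cV\) to be differentiable at \(z\in(p/1,1)\subset(0,1)\). It also requires \(\cV\) to be defined at \(0\) with curvature holding on \([0,z]\), so that the tangent inequality at \(z\) may be evaluated at the endpoint \(0\). I would state this as the maintained meaning of "convex" or "concave" on \([0,1]\). No other obstacle arises: neither part requires the factor-response coefficients of Lemma \ref{lem:factor-responses}. The diversion result uses only the indifference condition, and the continuation result uses only Bayes plausibility.
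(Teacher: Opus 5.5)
Your proposal is correct and follows the paper's proof essentially step for step. The paper uses the same three steps: the product rule on \(D=(1-q)X\) with \(\dd\log X/\dd\log q=-\eta\) from the indifference condition, the chain rule on \(\cJ\) with \(\dd z/\dd q=-z/q\), and the tangent-line inequality at \(z\) evaluated at \(0\). Your regularity remarks (differentiability of \(\cV\) at \(z\), curvature on \([0,z]\), and \(\eta>0\) from \(\Delta,\Delta'>0\)) make explicit what the paper's proof leaves implicit.
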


There are accordingly two distinct effects. Expected diversion falls because betrayal occurs less often and the amount seized when it occurs is smaller. The informational effect can have either sign. With convex continuation value, the authority would prefer a more dispersed posterior distribution, and the reduction in learning is a cost. With concave continuation value, as in the preceding example, the same change is a benefit.

This distinction matters for policy comparisons. Greater concern for future outcomes need not always favor a less protective tax code. We first give a monotonicity result for the convex case, holding the executive's incentives fixed. We then ask what remains when patience changes those incentives as well.

\subsection{A continuation-weight comparative static}

Consider a compact connected regular segment \(X_G\) of the balanced-budget policy set. Index it by \(\chi\), with larger \(\chi\) corresponding to more capital taxation and less labor taxation. Hold the opportunist's discount factor \(\beta_O\) and the continuation schedules fixed, so that the compliance map \(q_G(\chi)\) does not change in the comparison. Let
\begin{equation}
 c_G(\chi)=S(\chi)-\lambda_DD(\chi),
 \qquad
 j_G(\chi)=\cJ(q_G(\chi);p).
 \label{eq:reduced-components}
\end{equation}
Changing only the authority's continuation weight gives the problem
\begin{equation}
 \max_{\chi\in X_G}\;
 W_G(\chi;\omega),
 \qquad
 W_G(\chi;\omega)=c_G(\chi)+\omega j_G(\chi).
 \label{eq:fixed-weight-problem}
\end{equation}

\begin{proposition}
\label{prop:continuation-weight}
Suppose \(c_G,j_G\) are continuous and \(j_G\) is weakly decreasing.  The smallest
and largest optimal elements of \eqref{eq:fixed-weight-problem} are weakly decreasing
in \(\omega\).  If \(j_G\) is strictly decreasing between distinct candidate optima,
every optimal selection is weakly decreasing.
\end{proposition}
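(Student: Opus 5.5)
The plan is to read Proposition \ref{prop:continuation-weight} as a monotone comparative statics result in the sense of Topkis and Milgrom--Shannon, applied to a totally ordered choice set. The parameter \(\omega\) enters \(W_G\) linearly through \(j_G\). I will equip \(\omega\) with its usual order and \(\chi\) with the reversed order. The aim is to show that \(W_G\) has decreasing differences in \((\chi,\omega)\); equivalently, it has increasing differences once the order on \(\chi\) is reversed.

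The first step is existence. The segment \(X_G\) is compact and connected, and I will treat its parameterization as a compact interval. Since \(c_G\) and \(j_G\) are continuous, \(W_G(\cdot;\omega)\) is continuous. Its argmax \(M(\omega)\) is therefore nonempty and compact, so both its smallest and largest elements exist.

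The second step is the single-crossing inequality. Take \(\chi'>\chi\) and \(\omega'>\omega\). Then
\[
 [W_G(\chi';\omega')-W_G(\chi;\omega')]-[W_G(\chi';\omega)-W_G(\chi;\omega)]
 =(\omega'-\omega)[j_G(\chi')-j_G(\chi)]\le 0,
\]
because \(j_G\) is weakly decreasing. This is decreasing differences, and on a chain it is the only lattice structure needed.

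The third step is the standard exchange argument, written out directly to avoid appealing to Topkis. Let \(\chi\in M(\omega)\) and \(\chi'\in M(\omega')\) with \(\omega'>\omega\), and suppose \(\chi'>\chi\). Optimality at \(\omega\) gives \(W_G(\chi;\omega)\ge W_G(\chi';\omega)\). Adding the decreasing-differences inequality then yields \(W_G(\chi;\omega')\ge W_G(\chi';\omega')\). Hence \(\chi\in M(\omega')\). The symmetric argument gives \(\chi'\in M(\omega)\). So \(M(\omega')\wedge M(\omega)\) and \(M(\omega')\vee M(\omega)\), taken pointwise as min and max, stay in the respective sets.

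Applying this with \(\chi'=\max M(\omega')\) and \(\chi=\max M(\omega)\) gives \(\max M(\omega')\le\max M(\omega)\). Indeed, if the inequality failed, \(\chi'\) would belong to \(M(\omega)\) and exceed its maximum. The same reasoning with minima gives \(\min M(\omega')\le\min M(\omega)\).

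The last step is the strict statement. Suppose \(j_G\) is strictly decreasing between distinct candidate optima, and suppose \(\chi'>\chi\) with \(\chi\in M(\omega)\) and \(\chi'\in M(\omega')\). Then the displayed difference is strictly negative. The same chain of inequalities gives \(W_G(\chi;\omega')>W_G(\chi';\omega')\), which contradicts the optimality of \(\chi'\). Hence every element of \(M(\omega')\) lies weakly below every element of \(M(\omega)\), and every selection from \(M\) is weakly decreasing.

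I expect the only delicate point to be making precise what "between distinct candidate optima" means. I will read it as \(j_G(\chi')<j_G(\chi)\) whenever \(\chi<\chi'\) are optimal for some pair of weights. That reading is exactly what the strict version of the exchange argument uses.

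I also need to confirm that the compact segment can be treated as a compact interval in \(\chi\). That follows from its being a connected \(C^1\) curve indexed monotonically by \(\chi\), as in Corollary \ref{cor:connected-segment}.
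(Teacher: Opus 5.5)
Your proof is correct and takes essentially the same route as the paper's: the decreasing-differences identity \((\omega_2-\omega_1)[j_G(\chi_2)-j_G(\chi_1)]\le 0\), existence of smallest and largest maximizers by continuity and compactness, and the crossing argument. In that argument a crossing pair must be optimal at both weights, which rules out crossing by the extremal maximizers, and strict decrease rules out any crossing at all. Your Topkis/Milgrom--Shannon framing is only a label; the direct exchange argument you write out is the paper's argument.
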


On the segment described by Theorem \ref{thm:partial-vulnerability}, compliance rises with \(\chi\). If \(\cV\) is convex, expected continuation value therefore falls with \(\chi\), and the proposition applies. Giving the future more weight moves the smallest and largest optimal policies weakly away from capital taxation.

The argument does not require a concave objective or an interior solution. It is a comparison of the rankings of any two feasible policies: the more capital-heavy policy has the lower continuation payoff, so raising the weight on that payoff cannot make it relatively more attractive. What is essential is that the feasible set and the executive's response to each policy remain fixed. With concave continuation value the ordering of continuation payoffs is reversed, and this particular argument no longer gives a shift away from capital.

\subsection{Common patience}

Suppose instead that the authority and the opportunist share a discount factor \(\beta\). An increase in patience then does two things. It makes future information more valuable to the authority, but it also makes betrayal less attractive to the executive. A given tax code no longer induces the same compliance probability. The preceding result cannot be applied without accounting for that response.

To separate these effects, initially keep \(\Delta\) and \(\cV\) fixed as functions of the posterior, and assume that current welfare has no direct dependence on \(\beta\).\footnote{This is the common-discount comparison in a two-period economy with fixed terminal primitives. In a stationary economy, the continuation schedules would generally change with the discount factor as well. We discuss that further dependence below; we do not solve a stationary comparative-static problem here.}

Write \(b=\log\beta\), and use exact balance to express the local policy problem in terms of \(\chi\) and \(q\). The mixing condition is
\begin{equation}
 X(\chi,q)=\beta\Delta(p/q).
 \label{eq:common-beta-ic}
\end{equation}
For this reduced problem, define
\begin{equation}
 a=\frac{\partial\log X}{\partial\log q},
 \qquad
 g=-\frac{\partial\log X}{\partial\chi}>0,
 \qquad
 H=a+\eta>0.
 \label{eq:common-beta-elasticities}
\end{equation}
The first inequality says that a more capital-heavy policy lowers the prize at a given compliance probability; the second is the local stability condition for the mixing response. Differentiating the incentive condition gives
\begin{equation}
 q_\chi=\frac{qg}{H}>0,
 \qquad
 q_b=\frac qH>0.
 \label{eq:common-beta-q}
\end{equation}
Thus, greater patience raises compliance even before the authority changes its policy. The marginal compliance effect of changing the tax mix may change as well.

Let \(C(\chi,q)\) denote current welfare after the other static variables have been substituted out, and write \(J(q)=\cJ(q;p)\). The authority now maximizes
\begin{equation}
 \Pi(\chi,b)
 =
 C(\chi,q(\chi,b))+\beta J(q(\chi,b)).
 \label{eq:common-beta-objective}
\end{equation}

\begin{proposition}
\label{prop:common-beta}
At a twice differentiable interior mixing point,
\begin{align}
 \Pi_{\chi b}
 &=
 \beta J_q q_\chi
 +q_b\left[
 C_{\chi q}+(C_{qq}+\beta J_{qq})q_\chi
 \right]
 +(C_q+\beta J_q)q_{\chi b}.
 \label{eq:common-beta-cross}
\end{align}
At a unique interior optimum satisfying the strict second-order condition \(\Pi_{\chi\chi}<0\),
\begin{equation}
 \operatorname{sgn}\frac{\dd\chi^*}{\dd b}
 =
 \operatorname{sgn}\Pi_{\chi b}.
 \label{eq:common-beta-optimum}
\end{equation}
In particular, a sufficient set of conditions for
\(\Pi_{\chi b}\leq0\) is
\begin{align}
 J_q&\leq0,
 \label{eq:common-sufficient-one}\\
 C_{\chi q}+(C_{qq}+\beta J_{qq})q_\chi&\leq0,
 \label{eq:common-sufficient-two}\\
 (C_q+\beta J_q)q_{\chi b}&\leq0.
 \label{eq:common-sufficient-three}
\end{align}
Suppose the same compact ordered policy set \(X\) is available for every \(b\) in an
interval, after a \(b\)-independent parameterization, and \(\Pi\) is continuous on the
product set.  If, for every \(\chi_2>\chi_1\) and \(b_2>b_1\),
\begin{align}
 &[\Pi(\chi_2,b_2)-\Pi(\chi_1,b_2)]
 -[\Pi(\chi_2,b_1)-\Pi(\chi_1,b_1)]
 \leq0,
 \label{eq:common-global-dd}
\end{align}
then the smallest and largest optimal policies are weakly decreasing in common
patience.  On smooth cells, \(\Pi_{\chi b}\leq0\), together with the corresponding
finite-difference inequalities across regime boundaries and kinks, is sufficient.
\end{proposition}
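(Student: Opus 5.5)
The proof has three parts: a chain-rule calculation for the cross-partial, the standard implicit-function argument for the sign of the optimum's response, and a monotone comparative statics argument for the global statement.

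\emph{Cross-partial.} I start from $\Pi_\chi = C_\chi + (C_q+\beta J_q)q_\chi$, using the chain rule with $q=q(\chi,b)$ given by \eqref{eq:common-beta-ic}. I then differentiate with respect to $b=\log\beta$, noting that $\partial\beta/\partial b=\beta$. The explicit dependence on $\beta$ contributes $\beta J_q q_\chi$. Differentiating $C_\chi$ through $q$ gives $C_{\chi q}q_b$. Differentiating the coefficient $C_q+\beta J_q$ through $q$ gives $(C_{qq}+\beta J_{qq})q_b q_\chi$. Finally, differentiating $q_\chi$ itself gives $(C_q+\beta J_q)q_{\chi b}$. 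Collecting the terms multiplied by $q_b$ yields \eqref{eq:common-beta-cross}. The needed smoothness of $q(\chi,b)$, and the formulas $q_\chi=qg/H$ and $q_b=q/H$, come from the implicit function theorem applied to \eqref{eq:common-beta-ic} with $H>0$. This is exactly as in \eqref{eq:common-beta-q}, since $\partial\log[\beta\Delta(p/q)]/\partial\log q=-\eta$.

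\emph{Sign of the response of the optimum.} At an interior optimum with $\Pi_{\chi\chi}<0$, the first-order condition $\Pi_\chi(\chi^*,b)=0$ defines $\chi^*(b)$ locally by the implicit function theorem, with $\dd\chi^*/\dd b=-\Pi_{\chi b}/\Pi_{\chi\chi}$. This gives \eqref{eq:common-beta-optimum}. Uniqueness ensures that the locally defined branch is the optimum. The sufficient conditions follow term by term. We have $\beta>0$ and $q_\chi>0$, so \eqref{eq:common-sufficient-one} signs the first term. We have $q_b>0$, so \eqref{eq:common-sufficient-two} signs the bracket. Condition \eqref{eq:common-sufficient-three} signs the last term directly.

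\emph{Global statement.} Condition \eqref{eq:common-global-dd} states that $\Pi$ has decreasing differences in $(\chi,b)$ on a fixed compact chain $X$. I will argue directly, as in Proposition \ref{prop:continuation-weight}, rather than cite Topkis. Take $b_2>b_1$ and optimizers $\chi_i\in\argmax\Pi(\cdot,b_i)$, and suppose $\chi_2>\chi_1$. Optimality gives $\Pi(\chi_2,b_2)-\Pi(\chi_1,b_2)\geq0$ and $\Pi(\chi_2,b_1)-\Pi(\chi_1,b_1)\leq0$. Combined with \eqref{eq:common-global-dd}, both differences must vanish. Hence $\chi_1\in\argmax\Pi(\cdot,b_2)$ and $\chi_2\in\argmax\Pi(\cdot,b_1)$. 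Applying this to the largest optimizer at $b_2$ and the smallest at $b_1$ shows that the smallest and largest optima are weakly decreasing. These extremal optima exist because continuity on a compact set makes the argmax nonempty and closed. For the final sentence, I integrate $\Pi_{\chi b}$ over rectangles $[\chi_1,\chi_2]\times[b_1,b_2]$ contained in smooth cells. I then chain the finite differences across the finitely many cell boundaries, using the assumed boundary inequalities, to recover \eqref{eq:common-global-dd}.

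\emph{Main obstacle.} I expect the main difficulty to be bookkeeping in the cross-partial: in particular, keeping track of whether $\beta$ or $b$ is the differentiation variable, and making sure that $q_{\chi b}$ is not double-counted. I would first verify the cross-partial symbolically against the formulas in \eqref{eq:common-beta-q}. The remaining steps are routine.
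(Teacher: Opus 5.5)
Your proposal is correct and follows essentially the same route as the paper. Like the paper, you differentiate \(\Pi_\chi=C_\chi+(C_q+\beta J_q)q_\chi\) in \(b\) using \(\partial\beta/\partial b=\beta\) and \(q_\chi=qg/H\), \(q_b=q/H\); sign \(\dd\chi^*/\dd b=-\Pi_{\chi b}/\Pi_{\chi\chi}\); sign each term with \(q_\chi,q_b>0\); and reuse the crossing argument of Proposition~\ref{prop:continuation-weight}. Your pairing of the largest optimizer at \(b_2\) with the smallest at \(b_1\) does work, after a short case split on whether the former exceeds the latter, and your treatment of that step and of the smooth-cell sentence is more explicit than the paper's, which leaves that last sentence essentially unargued.
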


The first term in \eqref{eq:common-beta-cross} is the effect retained in the fixed-weight comparison. Under convex continuation value, it favors moving away from the vulnerable base. The second term accounts for the change in the marginal value of a reform when patience raises compliance. The third accounts for the change in the compliance response to the reform itself.

Neither the labor-tax condition in Theorem \ref{thm:partial-vulnerability} nor the investment condition \(\zeta\geq\eta\) signs the last two terms. Hence they are not sufficient for a common-patience theorem. The next example shows the failure of the inference in a reduced policy problem. Its purpose is limited: it is not a counterexample constructed from the full production economy.

\begin{example}
\label{ex:common-beta-counterexample}
Let
\[
 p=\frac1{100},\qquad
 \chi\in\left[\frac1{10},\frac12\right],\qquad
 \beta\in\left[\frac3{20},\frac3{10}\right],
\]
and set
\begin{equation}
 \Delta(z)=z,
 \qquad
 X(\chi,q)=\frac1{100\chi}.
 \label{eq:counterexample-primitives}
\end{equation}
The mixing condition gives \(q=\beta\chi\), which lies strictly between \(p\) and one.
Let current welfare be \(C=q\), and let
\[
 \cV(z)=100z^2,\qquad \cV(0)=0.
\]
Then continuation value is strictly convex and
\[
 J(q)=q\cV(p/q)=\frac1{100q}.
\]
A more capital-heavy policy therefore raises current welfare and lowers expected continuation welfare. These are the same rankings used in the fixed-weight argument. When the discount factor is common, however, substituting the compliance response gives
\begin{equation}
 \Pi(\chi,\beta)
 =\beta\chi+\frac1{100\chi}.
 \label{eq:counterexample-objective}
\end{equation}
The objective is strictly convex in \(\chi\), so it suffices to compare the two endpoints:
\begin{equation}
 \Pi(1/2,\beta)-\Pi(1/10,\beta)
 =\frac25\left(\beta-\frac15\right).
 \label{eq:counterexample-switch}
\end{equation}
Thus
\begin{equation}
 \argmax_\chi\Pi(\chi,\beta)
 =
 \begin{cases}
  \{1/10\},&\beta<1/5,\\
  \{1/10,1/2\},&\beta=1/5,\\
  \{1/2\},&\beta>1/5.
 \end{cases}
 \label{eq:counterexample-argmax}
\end{equation}
As common patience crosses \(1/5\), the authority switches toward the vulnerable base. By contrast, if the opportunist's discount factor were held fixed and only the social continuation weight changed, Proposition \ref{prop:continuation-weight} would apply.
\end{example}

The reason for the reversal is visible in \eqref{eq:counterexample-objective}. The direct increase in the continuation weight is exactly offset by the change in compliance inside the continuation payoff. Meanwhile, current compliance becomes more responsive to the policy. Increasing common patience then favors the capital-heavy endpoint. The example establishes that the fixed-weight assumptions are insufficient; it does not establish that the same reversal must occur under any particular production technology.

In a stationary model there is another adjustment to consider. If the opportunist's continuation schedule depends directly on \(b\), let
\begin{equation}
 \nu_\Delta(z,b)
 =
 \left.\partial_b\log\Delta(z,b)\right|_z.
 \label{eq:nu-delta}
\end{equation}
Then
\begin{equation}
 q_b=\frac{q[1+\nu_\Delta(z,b)]}{H}.
 \label{eq:stationary-qb}
\end{equation}
If the authority's continuation payoff is also \(J=J(q,b)\), the first term in \eqref{eq:common-beta-cross} becomes \(\beta(J_q+J_{qb})q_\chi\), where \(J_{qb}\) is the direct derivative at fixed \(q\). The other terms keep their form, using the modified derivatives of \(q\), provided current welfare has no direct \(b\)-dependence.

A stationary monotonicity result would therefore require more than the local production theorem. One would have to solve for the continuation schedules and show that the relevant equilibrium selection is differentiable and preserves the required ordering of policies. We make no such claim here.

\section{Exposure, institutional scope, and interpretation}
\label{sec:scope}

The argument relies on a difference in what remains exposed after private decisions have been made. It does not rely on capital and labor having intrinsically different claims to protection. We conclude the analysis by separating this exposure difference from two other features of the model: what the executive can seize, and who chooses the ordinary tax code.

\subsection{If both factor payments are exposed}
Suppose that betrayal allows the executive to seize fractions \(\xi_K\) and \(\xi_L\) of the after-tax capital and labor payments. The prize can then be written as
\begin{equation}
 X_{\boldsymbol\xi}
 =
 \xi_KxrK+\xi_LywL
 =
 \xi_L(Y-G)+(\xi_K-\xi_L)xrK.
 \label{eq:exposure-decomposition}
\end{equation}
The second equality follows from constant returns and exact budget balance. It separates the exposed share of total private income from the additional exposure of the capital payment. At a given output level, shifting the tax mix changes the prize through the latter term, whose coefficient is \(\xi_K-\xi_L\).

If the two payments are equally exposed, this direct composition effect disappears. A change in taxes can still affect temptation by changing output, but it no longer pre-collects one payment that is more exposed than the other. This distinction identifies the source of the mechanism; it does not extend the exact sign formula to two vulnerable bases. If labor income can also be seized, its expected return depends on compliance, and the factor-supply calculation must be done again.

At full compliance with both payments exposed, the allocation-level prize becomes
\begin{equation}
 \Phi_G(K)
 =
 \xi_KK\Psi_K'(K)
 +\xi_L\ell_G(K)\Psi_L'(\ell_G(K)).
 \label{eq:two-base-allocation-prize}
\end{equation}
If \(\Phi_G\) is strictly increasing, the same truncation argument used in Proposition \ref{prop:allocation-ceiling} applies. With two exposed payments, however, this monotonicity is an additional restriction. Labor may fall as capital rises along the balanced-budget branch, so the labor payment can offset part of the increase in the capital payment.

\subsection{What the executive can seize}
In the model, the executive seizes a flow of income after capital has been installed. The closest applications are therefore projects, concessions, or investments whose residual cash flow remains exposed after the investor has incurred the relevant cost. The interpretation is about the timing and enforceability of the payment, not about whether the investor is domestic or foreign.

If the executive can seize the physical asset or its undepreciated principal, ordinary income taxation need not reduce the prize in the same way. An extension to durable capital would have to account for the value of the remaining asset. The present theorem should not be applied to that environment by simply renaming the income flow.

\subsection{Who chooses the ordinary tax code}
The authority that chooses ordinary taxes does not know the executive's type. This assumption allows us to study a Ramsey choice of tax composition without also making the tax code a signal of type. If the informed executive chose the code itself, alternative announcements would require beliefs, and the equilibrium-selection issue emphasized by \citet{Lu2013} would reappear, here with more than one instrument.

The institutional separation is therefore part of the economics, not merely a way of simplifying the solution. Ordinary revenue must be committed to the expenditure requirement, while the executive retains a subsequent power to interfere with residual private payments. One can think of appropriated receipts being protected by a fiscal procedure that does not fully protect property rights after investment. The model is informative to the extent that this asymmetry describes the institution under consideration.

None of these qualifications makes capital taxation nondistortionary. Holding compliance and the labor tax fixed, a higher capital tax lowers investment. The protective effect arises because compliance changes when the executive's temptation changes. Our results describe this adjustment on a regular mixing segment, and the separate allocation restriction needed to sustain full compliance. They do not rank disconnected fiscal branches or provide a general policy theorem for a common discount factor.

\section{Conclusion}

Can a tax on capital protect investment? In the environment studied here, it can. An investor cares about the ordinary tax bill, but also about what the executive may take after the investment has been made. Collecting more of an exposed return through ordinary taxation leaves a smaller prize for a subsequent seizure. When this makes compliance sufficiently more likely, the investor can be better off despite the higher statutory rate.

The paper gives conditions under which this argument survives the joint adjustment of capital, labor, factor prices, and tax revenue. On a regular mixing segment, a revenue-neutral shift from labor to capital taxation raises compliance if the labor tax is below its local fixed-wage revenue peak. The sign does not depend on the elasticity of substitution. Investment also rises when expected retention responds more strongly than the opportunist's continuation gain. Under that additional condition, the accompanying labor-tax reduction raises labor, output, and current surplus.

There is a related restriction on what can be achieved without any betrayal. At full compliance, the executive's temptation is determined by the exposed after-tax capital payment. Credibility places a ceiling on that payment and, consequently, on the capital allocation. On a single-peaked Ramsey branch, the best full-compliance allocation is the ordinary Ramsey choice truncated at this ceiling. The corresponding statutory tax rate need not move monotonically, since factor prices and labor supply also change.

Finally, protecting investment affects learning about the executive. Greater compliance makes the posterior less dispersed while preserving its mean. This is costly under convex continuation value and beneficial under concave continuation value. Even in the convex case, a policy comparison based on giving the authority more weight on the future must be distinguished from one in which both the authority and the executive become more patient. The latter also changes the executive's incentives.

The broader implication is that optimal tax composition depends on what can still be taken after private choices are sunk. Ordinary tax distortions remain, but they are not the only relevant margin when private income is unevenly exposed to a later exercise of government power.

\clearpage
\appendix

\section{Proofs for the local composition results}
\label{app:local}

We first establish the local equilibrium response and then impose revenue neutrality. Keeping these steps separate is useful: the positivity of a fixed-tax compliance response does not by itself determine the response to a change in tax composition.

\begin{proof}[Proof of Proposition \ref{prop:mixing-existence}]
By continuity and the strict endpoint inequalities, \(\Phi\) has a zero in \((p,1)\). To establish uniqueness, differentiate the log gap while allowing the household allocation to adjust:
\[
 \frac{\dd\Phi(q)}{\dd\log q}=A(q)\zeta(q)+\eta(p/q)>0,
\]
Thus \(\Phi\) is strictly increasing, and there can be only one zero. Its nonzero derivative permits the use of the implicit-function theorem for local changes in the tax code and prior. At an exactly balanced solution, \(\cR_L>0\) also permits solving the budget equation locally for \(\tau_L\) as a function of \(\tau_K\). The slope is \(-\cR_K/\cR_L<0\), giving the required orientation of the budget contour.
\end{proof}

\begin{proof}[Proof of Lemma \ref{lem:factor-responses}]
Start from the household conditions \eqref{eq:household-focs}. Evaluating the supply elasticities at the current allocation and taking proportional changes gives
\begin{equation}
 \wh K=\varepsilon_K(\wh u+\wh r),
 \qquad
 \wh L=\varepsilon_L(\wh v+\wh w).
 \label{eq:A-factor-focs}
\end{equation}
Now substitute the factor-price responses in \eqref{eq:factor-price-responses}. The resulting system is
\begin{align}
 \left[1+\frac{\varepsilon_K(1-\theta)}{\sigma}\right]\wh K
 -\frac{\varepsilon_K(1-\theta)}{\sigma}\wh L
 &=\varepsilon_K\wh u,
 \label{eq:A-factor-K}\\
 -\frac{\varepsilon_L\theta}{\sigma}\wh K
 +\left[1+\frac{\varepsilon_L\theta}{\sigma}\right]\wh L
 &=\varepsilon_L\wh v.
 \label{eq:A-factor-L}
\end{align}
The determinant of the coefficient matrix is \(\mathcal D>0\), so the system is invertible. Inverting it gives \eqref{eq:factor-matrix}--\eqref{eq:factor-coefficients}; taking the determinant of that response matrix gives \(cf-de=\varepsilon_K\varepsilon_L/\mathcal D\).

For gross factor payments, it is convenient to use household optimality once more:
\begin{equation}
 \wh I_K
 =
 \left(1+\frac1{\varepsilon_K}\right)\wh K-\wh u,
 \qquad
 \wh I_L
 =
 \left(1+\frac1{\varepsilon_L}\right)\wh L-\wh v.
 \label{eq:A-income}
\end{equation}
Substituting the responses of \(K\) and \(L\) into these expressions gives \eqref{eq:income-responses}--\eqref{eq:ABCE}. Since \(c,d>0\), we have \(A+1>0\) and \(B>0\). Subtracting one in the definition of \(A\) prevents the same argument from signing \(A\) itself.
\end{proof}

\begin{proof}[Proof of Theorem \ref{thm:partial-vulnerability}]
We calculate the revenue response to each tax separately and then combine the two changes so that total revenue is unchanged. If \(\tau_L\) is held fixed, the compliance equation \eqref{eq:compliance-response} and \(\wh u=\wh x+\zeta\wh q\) give
\begin{equation}
 \wh q
 =
 \frac{A+1}{(A\zeta+\eta)x}\,\dd\tau_K,
 \qquad
 \wh u
 =
 \frac{\zeta-\eta}{(A\zeta+\eta)x}\,\dd\tau_K.
 \label{eq:A-K-perturbation}
\end{equation}
The capital-tax revenue derivative includes both the direct increase in the rate and the changes in the two factor payments:
\begin{equation}
 \frac{\cR_K}{Y}
 =
 \theta+
 \frac{\zeta-\eta}{(A\zeta+\eta)x}
 [\tau_K\theta A+\tau_L(1-\theta)C].
 \label{eq:A-RK}
\end{equation}
For a change in \(\tau_L\) with \(\tau_K\) held fixed, the corresponding responses are
\begin{equation}
 \wh q=\frac{B}{(A\zeta+\eta)y}\,\dd\tau_L,
 \quad
 \wh u=\frac{\zeta B}{(A\zeta+\eta)y}\,\dd\tau_L,
 \quad
 \wh v=-\frac{\dd\tau_L}{y}.
 \label{eq:A-L-perturbation}
\end{equation}
Substituting them into the revenue equation gives
\begin{equation}
 \frac{\cR_L}{Y}
 =
 (1-\theta)
 -\frac{\tau_K\theta B\eta}{(A\zeta+\eta)y}
 +\frac{\tau_L(1-\theta)
 [C\zeta B-E(A\zeta+\eta)]}
 {(A\zeta+\eta)y}.
 \label{eq:A-RL}
\end{equation}

We can now choose the labor-tax change that offsets the capital-tax revenue change. Exact balance requires
\[
 \frac{\dd\tau_L}{\dd\tau_K}=-\frac{\cR_K}{\cR_L}.
\]
Using this slope in \eqref{eq:compliance-response}, we obtain
\begin{equation}
 (A\zeta+\eta)
 \left.\frac{\dd\log q}{\dd\tau_K}\right|_{R=G}
 =
 \frac{A+1}{x}
 -\frac{B}{y}\frac{\cR_K}{\cR_L}.
 \label{eq:A-contour-q}
\end{equation}
Since \(x,y,\cR_L/Y\) are positive, multiplying the right-hand side by \(xy(\cR_L/Y)\) preserves its sign. Denote the resulting numerator by
\begin{equation}
 Q
 =
 y(A+1)\frac{\cR_L}{Y}
 -xB\frac{\cR_K}{Y}.
 \label{eq:A-Q}
\end{equation}
Substitute \eqref{eq:A-RK}--\eqref{eq:A-RL} and use \(x+\tau_K=y+\tau_L=1\). The terms containing the continuation and retention elasticities cancel, leaving
\begin{equation}
 Q
 =
 (1-\theta)(A+1)-\theta B
 +\tau_L(1-\theta)[BC-(A+1)(E+1)].
 \label{eq:A-Q-collected}
\end{equation}
There are two remaining coefficient combinations. The first follows directly from the definitions in Lemma \ref{lem:factor-responses}; the second uses the determinant identity \(cf-de=\varepsilon_K\varepsilon_L/\mathcal D\):
\begin{align}
 (1-\theta)(A+1)-\theta B
 &=
 \frac{(1+\varepsilon_K)(1-\theta)}{\mathcal D},
 \label{eq:A-identity-one}\\
 BC-(A+1)(E+1)
 &=
 -\frac{(1+\varepsilon_K)(1+\varepsilon_L)}{\mathcal D}.
 \label{eq:A-identity-two}
\end{align}
Hence
\begin{equation}
 Q
 =
 \frac{(1+\varepsilon_K)(1-\theta)}{\mathcal D}
 [1-(1+\varepsilon_L)\tau_L].
 \label{eq:A-Q-final}
\end{equation}
Combining this expression with \eqref{eq:A-contour-q} gives the level formula \eqref{eq:composition-level}. All factors outside the square bracket are positive at a regular mixing point, which proves the sign result. If the labor tax is positive, the implied decrease in that tax is locally feasible. At zero, the derivative still describes the extended contour, but the statutory constraint prevents moving in that direction.
\end{proof}

\begin{proof}[Proof of Corollary \ref{cor:connected-segment}]
Apply Theorem \ref{thm:partial-vulnerability} at each point of the maintained \(C^1\) segment. Since \(\tau_K\) increases with \(\chi\), the derivative of \(q\) with respect to \(\chi\) is strictly positive. Integrating between any two points proves the asserted ordering. The argument applies only while the path remains in the stated regime and satisfies the maintained inequalities.
\end{proof}

\begin{proof}[Proof of Proposition \ref{prop:partial-investment}]
The production terms can be removed from the prize by using capital-supply optimality:
\[
 xr=\frac{\Psi_K'(K)}{m(q)}.
\]
Thus
\begin{equation}
 X=\xi xrK=\frac{\xi K\Psi_K'(K)}{m(q)}.
 \label{eq:A-prize-capital}
\end{equation}
Holding exposure fixed and taking proportional changes gives
\begin{equation}
 \wh X
 =
 \left(1+\frac1{\varepsilon_K}\right)\wh K-\zeta\wh q.
 \label{eq:A-prize-capital-diff}
\end{equation}
With \(p\), \(\beta_O\), and \(\Delta\) fixed, the mixing condition also gives \(\wh X=-\eta\wh q\). Equating the two expressions for \(\wh X\) and solving for \(\wh K\) proves \eqref{eq:investment-identity}.
\end{proof}

\begin{proof}[Proof of Corollary \ref{cor:current-surplus}]
Proposition \ref{prop:partial-investment} first gives \(\wh K\geq0\). To determine the labor response, combine labor-supply optimality with the wage response in \eqref{eq:factor-price-responses}:
\begin{equation}
 \left(1+\frac{\varepsilon_L\theta}{\sigma}\right)\wh L
 =
 \varepsilon_L\wh y
 +\frac{\varepsilon_L\theta}{\sigma}\wh K.
 \label{eq:A-labor-change}
\end{equation}
The reform lowers \(\tau_L\), so \(\wh y>0\). Both terms on the right are nonnegative and the first is strictly positive; hence labor rises. Since both factor shares are positive, \(\wh Y=\theta\wh K+(1-\theta)\wh L>0\). For current surplus, use the household first-order conditions to write
\begin{align}
 \dd S
 &=(F_K-\Psi_K')\dd K+(F_L-\Psi_L')\dd L \notag\\
 &=r[1-m(q)x]\dd K+w\tau_L\dd L.
 \label{eq:A-surplus}
\end{align}
The capital term is nonnegative, and the labor term is strictly positive because \(\tau_L>0\) and \(\dd L>0\). This proves the surplus claim.
\end{proof}

\section{Proofs for the allocation results}
\label{app:allocation}

\begin{proof}[Proof of Lemma \ref{lem:allocation-tax-equivalence}]
Under full compliance, Euler's theorem and household optimality allow statutory revenue to be expressed entirely in terms of the allocation:
\begin{align*}
 \tau_KF_KK+\tau_LF_LL
 &=
 F-(1-\tau_K)F_KK-(1-\tau_L)F_LL\\
 &=
 F-K\Psi_K'(K)-L\Psi_L'(L).
\end{align*}
Setting this expression equal to \(G\) proves necessity of \eqref{eq:fc-implementability}.

Conversely, take an allocation on the selected branch and set taxes according to \eqref{eq:implementing-taxes}. Conditional on compliance, the two household first-order conditions are satisfied. Strict convexity makes each household's choice optimal and unique at the implied competitive prices. The same revenue identity shows that these taxes raise exactly \(G\). At an interior allocation, the household conditions determine both net-of-tax rates uniquely. Finally, the executive is willing to comply precisely when \eqref{eq:allocation-ic} holds.
\end{proof}

\begin{proof}[Proof of Proposition \ref{prop:allocation-ceiling}]
Using the capital-supply condition at \(q=1\), the amount available for seizure is
\[
 \xi(1-\tau_K)F_KK
 =
 \xi K\Psi_K'(K)
 =
 \xi\phi_K(K).
\]
The executive's incentive constraint is therefore \eqref{eq:allocation-ic}. Since \(\phi_K\) is continuous and strictly increasing, even the smallest capital allocation on the branch is infeasible if \(b<\phi_K(\underline K_G)\). If this inequality does not hold, the feasible capital choices form the interval
\begin{equation}
 [\underline K_G,K_G^D].
 \label{eq:A-feasible-interval}
\end{equation}
If \(K_G^R\) belongs to this interval, it is the unique surplus maximizer. If it lies above the upper endpoint, surplus is strictly increasing throughout the feasible interval and is uniquely maximized at \(K_G^D\). This proves the allocation rule and the stated condition for the ordinary Ramsey allocation to be credible.
\end{proof}

\begin{proof}[Proof of Corollary \ref{cor:allocation-comparative-statics}]
When the ceiling binds,
\[
 \phi_K(K_G^C)=\frac{\beta_O\Delta(p)}{\xi}.
\]
Implicit differentiation, using \(\phi_K'>0\), gives \eqref{eq:ceiling-positive}--\eqref{eq:ceiling-exposure}. A relaxation of the binding ceiling moves capital toward \(K_G^R\). By Assumption \ref{ass:fc-branch}, surplus is strictly increasing on that part of the branch, so a strict increase in feasible capital strictly raises surplus until the Ramsey point is reached. This argument uses the ordering of surplus, not a claim that its derivative must be strictly positive everywhere.

Under strict slack the Ramsey allocation remains feasible after sufficiently small parameter changes and hence does not move. At the boundaries, the same comparisons apply from the feasible side, giving the one-sided statements.
\end{proof}

\section{Proofs for the dynamic results}
\label{app:dynamics}

\begin{proof}[Proof of Proposition \ref{prop:information}]
Keep the prior and continuation schedules fixed. Along the mixing constraint, a rise in compliance lowers the posterior and gives \(\dd\log X/\dd\log q=-\eta\). Differentiating expected diversion therefore yields
\[
 D_q=-X+(1-q)X_q
 =-X\left[1+\eta\frac{1-q}{q}\right]<0.
\]
For the continuation term, differentiate both the probability of compliance and the posterior conditional on it:
\[
 \cJ_q=\cV(z)-z\cV'(z)-\cV(0).
\]
If \(\cV\) is convex, its graph lies above the tangent at \(z\), so \(\cV(0)\geq\cV(z)-z\cV'(z)\). This proves \(\cJ_q\leq0\). The tangent inequality reverses under concavity, and strict curvature makes the respective inequality strict.
\end{proof}

\begin{proof}[Proof of Proposition \ref{prop:continuation-weight}]
Take two policies, \(\chi_2>\chi_1\), and two continuation weights, \(\omega_2>\omega_1\). The change in the relative payoff of the capital-heavy policy is
\begin{align}
 &[W_G(\chi_2;\omega_2)-W_G(\chi_1;\omega_2)]
 -[W_G(\chi_2;\omega_1)-W_G(\chi_1;\omega_1)]
 \notag\\
 &\qquad
 =
 (\omega_2-\omega_1)[j_G(\chi_2)-j_G(\chi_1)]
 \leq0.
 \label{eq:A-decreasing-differences}
\end{align}
Continuity and compactness ensure that the maximizer set is nonempty and has smallest and largest elements. Suppose an optimal policy at the higher weight were strictly above an optimal policy at the lower weight. The two optimality inequalities, together with \eqref{eq:A-decreasing-differences}, imply that both policies must be optimal at both weights. Such a crossing is impossible for either the smallest or the largest maximizer. This proves their weak monotonicity.

If \(j_G\) is strictly decreasing between the two policies, the right-hand side of \eqref{eq:A-decreasing-differences} is strictly negative. The two optimality inequalities would instead require it to be nonnegative. No crossing by any optimal selections is then possible.
\end{proof}

\begin{proof}[Proof of Proposition \ref{prop:common-beta}]
First differentiate the mixing condition with respect to the policy, holding the common discount factor fixed:
\[
 -g+a\frac{q_\chi}{q}
 =
 -\eta\frac{q_\chi}{q},
\]
Rearranging gives \(q_\chi=qg/H\). Next hold the policy fixed and differentiate with respect to \(b\):
\[
 a\frac{q_b}{q}
 =
 1-\eta\frac{q_b}{q},
\]
Thus \(q_b=q/H\). These are different responses, although both use the same local incentive denominator.

The authority's marginal return to a policy change is
\[
 \Pi_\chi=C_\chi+(C_q+\beta J_q)q_\chi.
\]
Differentiate this expression with respect to \(b\), remembering both the induced change in \(q\) and \(\partial\beta/\partial b=\beta\). Collecting the terms gives \eqref{eq:common-beta-cross}. At the stated interior optimum, the strict second-order condition allows the first-order condition to be differentiated locally:
\[
 \frac{\dd\chi^*}{\dd b}
 =-\frac{\Pi_{\chi b}}{\Pi_{\chi\chi}},
\]
The denominator is negative by assumption, so the derivative has the sign of \(\Pi_{\chi b}\), as asserted. Conditions \eqref{eq:common-sufficient-one}--\eqref{eq:common-sufficient-three}, together with \(q_\chi,q_b>0\), make each term in the cross derivative nonpositive.

For the global statement, use the finite-difference inequality \eqref{eq:common-global-dd} on the common compact policy set. The optimality comparison in the proof of Proposition \ref{prop:continuation-weight} then applies directly. Checking cross derivatives only inside a smooth mixing region would not suffice if the comparison also crossed a regime boundary.
\end{proof}

\section{Derivations for the terminal benchmark}
\label{app:terminal}

\begin{proof}[Derivation of \eqref{eq:terminal-gamma}--\eqref{eq:terminal-delta}]
With isoelastic supplies, the terminal household conditions can be written as
\begin{equation}
 K_2
 =
 \bar K_2
 \left(z\bar x\alpha\frac{Y_2}{K_2}\right)^{\bar\varepsilon_K},
 \qquad
 L_2
 =
 \bar L_2
 \left(\bar y(1-\alpha)\frac{Y_2}{L_2}\right)^{\bar\varepsilon_L},
 \label{eq:A-terminal-focs}
\end{equation}
where \(\bar K_2,\bar L_2>0\) collect the supply scale parameters. Solving for each input as a function of output and substituting into Cobb--Douglas production gives
\[
 Y_2=C(z\bar x)^\gamma\bar y^\delta,
\]
where \(\gamma\) is given in \eqref{eq:terminal-gamma} and
\[
 \delta=
 \frac{(1-\alpha)\bar\varepsilon_L(1+\bar\varepsilon_K)}
 {1+(1-\alpha)\bar\varepsilon_K+\alpha\bar\varepsilon_L}.
\]
The terminal net-of-tax rates and all scale parameters are fixed, so they can be collected in \(\bar Y\), leaving \(Y_2=\bar Yz^\gamma\). The opportunist receives \(\bar x\alpha Y_2\) upon seizure. At zero reputation there is no terminal output, so subtracting \(V_O(0)=0\) gives \eqref{eq:terminal-opportunist}--\eqref{eq:terminal-delta}.
\end{proof}

\begin{proof}[Proof of Proposition \ref{prop:primitive-eta}]
Since \(\eta=\gamma\) and the denominator in \eqref{eq:terminal-gamma} is positive, \(\eta\leq1\) is equivalent to \eqref{eq:gamma-condition}. To obtain the sufficient restriction, subtract the left-hand side of that inequality from its right-hand side:
\[
 1+(1-2\alpha)\bar\varepsilon_K
 +\alpha\bar\varepsilon_L(1-\bar\varepsilon_K),
\]
Both nonconstant terms are nonnegative when \(\alpha\leq1/2\) and \(\bar\varepsilon_K\leq1\), whatever the positive labor-supply elasticity. This proves the sufficient condition.
\end{proof}

\begin{proof}[Proof of Proposition \ref{prop:terminal-welfare}]
Integrating the isoelastic supply schedules, with costs normalized to zero at zero supply, and applying \eqref{eq:A-terminal-focs} gives
\[
 \Psi_{K,2}(K_2)
 =
 \frac{\bar\varepsilon_K}{1+\bar\varepsilon_K}
 z\bar x\alpha Y_2,
 \qquad
 \Psi_{L,2}(L_2)
 =
 \frac{\bar\varepsilon_L}{1+\bar\varepsilon_L}
 \bar y(1-\alpha)Y_2.
\]
Subtracting these costs from output and using \(Y_2=\bar Yz^\gamma\) proves \eqref{eq:terminal-surplus}--\eqref{eq:terminal-cost-shares}. The constants satisfy \(0<\ell<1\), \(k>0\), and \(k+\ell<1\). Differentiating twice yields
\begin{equation}
 S_2''(z)
 =
 \bar Y\left[
 (1-\ell)\gamma(\gamma-1)z^{\gamma-2}
 -k\gamma(\gamma+1)z^{\gamma-1}
 \right]<0
 \label{eq:A-terminal-second}
\end{equation}
for \(0<\gamma\leq1\): the first term is nonpositive and the second is strictly negative. Since \(S_2(0)=0\), the strict concavity inequality gives \(S_2(z)-zS_2'(z)>0\). This is exactly the derivative of \(qS_2(p/q)\) with respect to \(q\). The current-surplus comparison follows from Corollary \ref{cor:current-surplus} under its stated conditions.
\end{proof}


\begin{thebibliography}{99}

\bibitem[Ablyatifov and Lukyanov(2026)]{AblyatifovLukyanov2026}
Ablyatifov, E. and Lukyanov, G. (2026).
Government reputation and fiscal capacity.
Working paper, July 2026. arXiv:2509.03087.

\bibitem[Acemoglu et~al.(2011)Acemoglu, Golosov, and Tsyvinski]{AcemogluGolosovTsyvinski2011}
Acemoglu, D., Golosov, M., and Tsyvinski, A. (2011).
Political economy of Ramsey taxation.
\textit{Journal of Public Economics} 95(7--8), 467--475.

\bibitem[Aguiar and Amador(2011)]{AguiarAmador2011}
Aguiar, M. and Amador, M. (2011).
Growth in the shadow of expropriation.
\textit{Quarterly Journal of Economics} 126(2), 651--697.

\bibitem[Benhabib and Rustichini(1997)]{BenhabibRustichini1997}
Benhabib, J. and Rustichini, A. (1997).
Optimal taxes without commitment.
\textit{Journal of Economic Theory} 77(2), 231--259.

\bibitem[Diamond and Mirrlees(1971a)]{DiamondMirrlees1971a}
Diamond, P. A. and Mirrlees, J. A. (1971a).
Optimal taxation and public production I: Production efficiency.
\textit{American Economic Review} 61(1), 8--27.

\bibitem[Diamond and Mirrlees(1971b)]{DiamondMirrlees1971b}
Diamond, P. A. and Mirrlees, J. A. (1971b).
Optimal taxation and public production II: Tax rules.
\textit{American Economic Review} 61(3), 261--278.

\bibitem[Dovis and Kirpalani(2021)]{DovisKirpalani2021}
Dovis, A. and Kirpalani, R. (2021).
Rules without commitment: Reputation and incentives.
\textit{Review of Economic Studies} 88(6), 2833--2856.

\bibitem[Halac and Yared(2022)]{HalacYared2022}
Halac, M. and Yared, P. (2022).
Fiscal rules and discretion under limited enforcement.
\textit{Econometrica} 90(5), 2093--2127.

\bibitem[Lu(2013)]{Lu2013}
Lu, Y. K. (2013).
Optimal policy with credibility concerns.
\textit{Journal of Economic Theory} 148(5), 2007--2032.

\bibitem[Park(2014)]{Park2014}
Park, Y. (2014).
Optimal taxation in a limited commitment economy.
\textit{Review of Economic Studies} 81(2), 884--918.

\bibitem[Phelan(2006)]{Phelan2006}
Phelan, C. (2006).
Public trust and government betrayal.
\textit{Journal of Economic Theory} 130(1), 27--43.

\bibitem[Phelan and Stacchetti(2001)]{PhelanStacchetti2001}
Phelan, C. and Stacchetti, E. (2001).
Sequential equilibria in a Ramsey tax model.
\textit{Econometrica} 69(6), 1491--1518.

\bibitem[Ramsey(1927)]{Ramsey1927}
Ramsey, F. P. (1927).
A contribution to the theory of taxation.
\textit{Economic Journal} 37(145), 47--61.

\bibitem[Reis(2013)]{Reis2013}
Reis, C. (2013).
Taxation without commitment.
\textit{Economic Theory} 52(2), 565--588.

\bibitem[Scheuer and Wolitzky(2016)]{ScheuerWolitzky2016}
Scheuer, F. and Wolitzky, A. (2016).
Capital taxation under political constraints.
\textit{American Economic Review} 106(8), 2304--2328.

\bibitem[Yun(2026)]{Yun2026}
Yun, Y. (2026).
Government reputation, FDI, and profit-shifting.
Working paper, July 2026 revision. SSRN 4569821.

\end{thebibliography}
\end{document}